\newtheorem{theorem}{Theorem}
\newtheorem{lemma}{Lemma}
\newtheorem{proposition}{Proposition}
\newtheorem{definition}{Definition}
\newtheorem{example}{Example}
\newcommand{\Inference}[2]
       {\mkern-2mu\displaystyle\frac{#1}{\vphantom{,}#2}\mkern-2mu}
\newcommand{\inter}[1]{\stackrel{\scriptsize #1}{\longrightarrow}}
\newcommand{\FORMb}[2]
{\mbox{$ {\cal I},{\cal C},{\cal R}\ \models^{#1}{#2}$}}
\title{Brane Calculi Systems:\\ A Static Preview of their Possible Behaviour}
\author{Chiara Bodei
\institute{Dipartimento di Informatica, Universit\`a di Pisa}
\email{chiara@di.unipi.it} \and Linda Brodo \institute{Dipartimento di Scienze dei Linguaggi, Universit\`a di Sassari} \email{brodo@uniss.it}}
\begin{document}
\maketitle
\pagestyle{plain}
\pagenumbering{arabic}
\setcounter{page}{41}

\begin{abstract}
We improve the precision of a previous Control Flow Analysis for Brane Calculi \cite{BBC09}, by adding information on the context and introducing causality information on the membranes. This allows us to prove some biological properties on the behaviour of systems specified in Brane Calculi.

\end{abstract}

\section{Introduction}
\label{section:introduction}

In \cite{C05} Cardelli introduced
a family of process calculi, called {\it  Brane Calculi}, endowed with dynamically nested membranes, {focussing on the interactions that happen {\it  on} 
membranes rather than inside them.}
Brane calculi offer a suitable and formal setting for investigating the behaviour of the specified systems, in order to establish the biological properties of interest.
Nevertheless, since the behaviour of a system is usually given in terms of its transition system,
whose size can be huge, especially when modelling complex biological systems, its exploration can be computationally hard. 
One possible solution consists in resorting to static techniques to extract
information on the dynamic behaviour and to check the related dynamic properties,
without actually running the corresponding program. The price is a loss in precision,
because these techniques can only provide approximations of the behaviour. 
However, we can exploit static results to perform a sort of preliminary and not too much expensive screening of
{\it  in silico} experiments. 
In the tradition \cite{cfaBioAmb} of applying static techniques
to process calculi used in modelling biological
phenomena, we present here a contextual and less approximate extension of the Control Flow Analysis for Brane Calculi introduced in \cite{BBC09}.
Control Flow Analysis (CFA) is a static technique, based on Flow Logic \cite{flowlogic}, that provides a variety of automatic and decidable methods and tools for analysing properties of computing systems.
One of the advantages of the CFA is that the obtained information on the behaviour are quite general. As a consequence,
a single analysis can suffice for verifying a variety of properties: different inspections of the CFA results permit to check different properties, with no need of re-analysing it several times. Only the values of interest tracked for testing change accordingly and the definitions of the static counterparts of the dynamic properties must be provided.
Control Flow Analysis provides indeed a {\it  safe over-approximation} of the {\it  exact} behaviour of a system, in terms of the possible reachable configurations. That is, at least all the valid behaviours are captured. More precisely, all those events that the analysis does not consider as possible will {\it  never} occur. On the other hand, the set of events deemed as possible may, or may not, occur in the actual dynamic evolution of the system.
To this end we have improved the precision of the CFA in \cite{BBC09}, by adding information on the context (along the lines of \cite{newcfaBioAmb}) and introducing causality information on the membranes.
Also, this extra-information allows us to refine the static checking of properties related to the spatial structure of membranes.
Furthermore, we focus on causality, since we believe it plays a key role in the understanding of the behaviour of biological systems,
in our case specified in a process algebra like the Brane one.
In order to investigate the possibilities of our CFA to capture some kinds of causal dependencies arising in the MBD version of 
Brane Calculi, we follow \cite{busi}
and its classification, by applying the analysis to the same key examples.
We observe that the analysis is able to capture some of these dependencies.
This is a small improvement in the direction of giving some causal structure to the usually flat CFA results.
The gain in precision is paid in terms of complexity: the presented analysis is rather expensive from a computational point of view.

The paper gets in the research stream dedicated to the application of 
static techniques and, in particular, Control Flow Analysis to bio-inspired process calculi e.g.,~\cite{cfaBioAmb,B08}.
Similar to ours are the works devoted to the analysis of BioAmbients \cite{Bio_Amb}.
In particular, 
\cite{newcfaBioAmb}, where the authors introduce a contextual CFA and \cite{PNN08} where a pathway analysis is exploited for investigating causal properties.
BioAmbients are analysed using instead Abstract Interpretation in \cite{GL05,GL06,GL10}.
The analysis presented in \cite{GL05} records information on the number of occurrences of objects and therefore is able to capture quantitative and causal aspects, necessary to reason on the temporal and spatial structure of processes.
In \cite{GL06}, in a different context, the behaviour of processes is safely approximated and the properties of a fragment of Computation Tree Logic is preserved.
This makes it possible to address temporal properties and therefore some kinds of causality.
Finally, \cite{GL10} presents a static analysis that computes an abstract transition systems for BioAmbients processes, able to validate temporal properties.
Our choice of the Brane calculi depends on the fact they have resulted to be particularly useful for modelling and reasoning about a large class of biological systems, such as the one of the eukaryotic cells that, differently from the prokaryotes, possess a set of internal membranes.
Among the first formalisms used to investigate biological membranes there are the 
P Systems \cite{Paun00}, introduced by P\u{a}un, which  formalise distributed parallel computations biologically-inspired: a biological system is seen as a complex hierarchical structure of nested membranes inspired by the structure of living cells. 
Finally, besides Brane, there are other calculi of interest for our approach, 
that have been specifically defined for 
modelling biological structures such as compartments and membranes,
e.g.,~an extension \cite{LT07} of $\kappa$-calculus \cite{kappa}, Beta Binders \cite{beta1} and the Calculus of Looping Sequences \cite{BMMT07}.

The rest of the paper is organised as follows. 
In Section \ref{brane}, we present the
MBD version of Brane Calculi. 
We introduce the Control Flow
Analysis in Section \ref{CFA}. 
In Section \ref{spatial}, we exploit our analysis to check some properties related to the hierarchical structure of Brane processes.
In Section \ref{causal}, we discuss on which kind of causal information our CFA can capture.
In Section \ref{viral}, the static treatment of Brane PEP action is added and
the whole analysis is applied to 
a model
of infective cycle of the Semliki Forest Virus.
Section \ref{concl} presents some concluding remarks.
Proofs of theorems and lemmata presented
throughout the paper are collected in Appendix \ref{app-proof}.

\section{An overview on Brane Calculi}
\label{brane}

The Brane Calculi~\cite{C05} are a family of calculi defined to describe the interaction amongst membraned component. Specifically, the membrane interactions are explicitly described by means of a set of membrane-based interaction capabilities. A system consists of nested membranes, as described by the following syntax, 
where $n$ is taken from a countable set $\Lambda$ of names.
$$
\begin{array}{lll}
P,Q::= & 
\diamond\ | \ 
P \circ Q\ | \ !P\ | \
\sigma\langle P \rangle^{\mu}
& \mbox{systems $\Pi$} \\
\sigma,\tau ::= 
& 0 \ | \ \sigma|\tau \ | \ !\sigma \ | \ a.\sigma
& \mbox{membrane processes $\Sigma$} \\
a,b ::= 
& mate_n \ | \ mate_n^{\bot}  \ | \ 
bud_n \ | \ bud_n^{\bot} (\rho) \ | \ 
drip(\rho)
& \mbox{MBD actions $\Xi_{MBD}$}
\end{array}
$$
\noindent
The basic structure of a system consists of (sub-)system composition, represented by the monoidal operator $\circ$ (associative, commutative and with $\diamond$ as neutral element).
Replication $!$ is used to represent the composition of an unbounded number of systems or membrane processes. $\sigma\langle P \rangle^{\mu}$ is a {\it  membrane} with content $P$ and interaction capabilities represented by the process $\sigma$. Note that, following  \cite{BBC09}, we annotate membranes with a unique label $\mu$ so as to distinguish the different syntactic occurrences of a membrane.
Note that these labels have no semantic meaning, but they are useful for our CFA. 
We refer to $\mu \in {\bf M}$ as the identity of the membrane $\sigma \langle P \rangle^{\mu}$, where ${\bf M}$ is the finite set of membrane identities. We assume that each considered system is contained in an ideal outermost membrane, identified by a distinguished element $* \in {\bf M}$.

Membranes exhibit interaction capabilities, like the MBD set of actions that model membrane fusion and splitting. The former is modelled by the {\it  mating} operation, the latter can be rendered both by {\it  budding}, that consists in splitting off exactly one internal membrane, and {\it  dripping}, that consists in splitting off one empty membrane. 
For the sake of simplicity, we focus here on the fragment of the calculus without communication primitives and molecular complexes, and with only the MBD actions. The treatment of the alternative set of PEP actions is analogous and it is postponed to Section \ref{causal}, where it is briefly introduced.

Membrane processes $\sigma$ consist of the empty process $0$,  the parallel composition of two processes, represented by the monoidal $|$ operator with $0$ as neutral element, the replication of a process and of the process that executes an interaction {\it  action} $a$ and then behaves as another process $\sigma$. 
Actions for mating ($mate_n$) and budding ($bud_n$) have the corresponding co-actions 
($mate_n^{\bot} $, $bud_n^{\bot} $ resp.) to synchronise with. Here $n$, which identifies a pair of  
complementary action and co-action that can interact, is taken from a countable set $\Lambda$ of names. The actions $bud_n^{\bot}(\rho)$ and $drip(\rho)$ are equipped with a process $\rho$ 
associated to the membrane that will be created when performing budding and dripping actions.

The semantics of the calculi is given in terms of a transition system defined up to a structural congruence and reduction rules.
The standard \emph{structural congruence} $\equiv$
on systems $\Pi$ and membranes $\Xi$ is the least congruence satisfying the
clauses in Table~\ref{structcong}.
\begin{table}[!t]
{\scriptsize
$$
\begin{array}{|c|}
\hline
\begin{array}{ll}
\!\!\! ({\mathcal S}/_{\equiv}, \circ, \diamond) \hbox{\  is a commutative monoid}
\hbox{\ \ \ \ }\hbox{\ \ \ \ }
\hbox{\ \ \ \ }\hbox{\ \ \ \ }
&
\!\!\!({\mathcal B}/_{\equiv}, |, 0)  \hbox{\ is a commutative monoid}
\\
!\diamond \equiv \diamond
&
!0 \equiv 0
\\
! (P \circ Q) \equiv ! P \circ ! Q
&
! (\sigma | \tau) \equiv ! \sigma | ! \tau 
\\
!! P \equiv ! P
&
!! \sigma \equiv ! \sigma
\\
!P \equiv P \circ\ !P
&
!\sigma \equiv \sigma | !\sigma
\\
\multicolumn{2}{c}{
0\langle \diamond \rangle^{\mu} \equiv \diamond
}
\\
\sigma \equiv \tau \Rightarrow \sigma | \rho  \equiv \tau | \rho
&
P \equiv Q \Rightarrow P \circ R \equiv Q \circ R
\\
\sigma \equiv \tau  \Rightarrow !\sigma \equiv !\tau
&
P \equiv Q \Rightarrow !P \equiv ! Q
\\
\sigma \equiv \tau  \Rightarrow a.\sigma \equiv a.\tau
&
P \equiv Q \wedge \sigma \equiv \tau \Rightarrow 	
\sigma \langle P \rangle^{\mu} \equiv \tau \langle Q \rangle^{\mu}
\end{array}
\\
\hline
\end{array}
$$
}
\caption{Structural Congruence for Brane Calculi.}
\label{structcong}
\end{table}
Reduction rules complete the definition of the interleaving semantics. They consist of the basic reaction rules, valid for all brane calculi (upper part of Table~\ref{opsem}) and by the reaction axioms for the MBD version (lower part of Table~\ref{opsem}). We use the symbol $\rightarrow^*$ for the reflexive and transitive closure of the transition relation 
$\rightarrow$.

\begin{table}[htb]
{
$$\begin{array}{|c|}
\hline
\begin{array}{lll}
(Par) & (Brane) & (Struct)\\
\Inference{ P \rightarrow Q}{ P \circ R \rightarrow Q \circ R}
\hspace*{1cm}
&
\Inference{ P \rightarrow Q}{ \sigma \langle P \rangle^{\mu} \rightarrow \sigma \langle Q \rangle^{\mu}}
\hspace*{1cm}
&
\Inference{P \equiv P'~\wedge~ P'\rightarrow Q'~\wedge~Q' \equiv Q
}{P\rightarrow  Q}
\end{array}
\\
[4ex]
\hline
\begin{array}{ll}
\\[.3 ex]
(Mate) &
\textcolor{blue}{mate_n}.\sigma | \sigma_0 \langle P \rangle^{\mu_{P}} \circ
\textcolor{blue}{mate_n^{\bot}}.\tau | \tau_0 \langle Q \rangle^{\mu_Q} \rightarrow
\sigma | \sigma_0 |\tau | \tau_0 \langle P \circ Q \rangle^{\mu_{PQ}} \\
& \mbox{ where }
\mu_{PQ} = {\bf MI_{mate}}(mate_n,\mu_P, mate^{\bot}_n,\mu_Q,\mu_{gp},\mu_p,\mu), 
\\
&
\mbox{ $\mu$ identifies the closest membrane surrounding $\mu_P$ and $\mu_Q$ in the context $\mu_{gp}\mu_p$}\\
(Bud) &
\textcolor{blue}{bud_n^{\bot}(\rho)}.\tau | \tau_0 \langle 
\textcolor{blue}{bud_n}.\sigma | \sigma_0 \langle P \rangle^{\mu_P} \circ Q
\rangle^{\mu_Q}  \rightarrow
\rho  \langle \sigma | \sigma_0 \langle P \rangle^{\mu_P} \rangle^{\mu_{R}} \circ \tau | \tau_0\langle Q \rangle^{\mu_Q} 
\\
& \mbox{ where }
\mu_{R} = {\bf MI_{bud}}(bud_n,\mu_P,bud^{\bot}_n,\mu_Q,\mu_{gp},\mu_p,\mu),
\\
&
\mbox{$\mu$ identifies the closest membrane surrounding $\mu_Q$ in the context $\mu_{gp}\mu_p$}
\\
(Drip) &
\textcolor{blue}{drip(\rho)}.\sigma | \sigma_0 \langle P \rangle^{\mu_P} \rightarrow
\rho \langle  \rangle^{\mu_{R}}  \circ  \sigma | \sigma_0 \langle P\rangle^{\mu_P} 
\\
& \mbox{ where }
\mu_{R} = {\bf MI_{drip}}(drip(\rho),\mu_P,\mu_{gp},\mu_p,\mu),
\\
&
\mbox{$\mu$ identifies the closest membrane surrounding $\mu_P$ in the context $\mu_{gp}\mu_p$}
\\[1ex]
\end{array}
\\
\hline
\end{array}$$
\caption{Reduction Semantics for Brane Calculi.}
\label{opsem}
}
\end{table}
They are quite self-explanatory and we make only a few observations about the labels treatment. Given a system, the set of its membrane identities is finite. Indeed, the structural congruence rule imposes  that
$!\sigma\langle P\rangle^{\mu} \ \equiv\  \sigma\langle P\rangle^{\mu}\ \circ \ !\sigma\langle P\rangle^{\mu}$, i.e.~no new identity label  $\mu$ is introduced by  recursive calls.
A distinguished membrane identity is needed each time a new membrane is generated as a  consequence of a performed action, e.g. the new membrane obtained by 
the fusion of two membranes after a mate synchronisation.
To determine such labels we exploit the functions  ${\bf MI_{mate}}$, ${\bf MI_{bud}}$, and ${\bf MI_{drip}}$ that return fresh and distinct  membrane identities,
depending on the actions and on their syntactic contexts~\cite{BBC09}.
Recall that the number of needed membrane identities is finite,
as finite are the possible combinations of actions and contexts. Therefore, we choose these functions in such a way that,
given an action and the identities of the membranes on which the action (and the
corresponding co-action, if any) reside, the function includes the membrane identity needed
to identify the membrane obtained by firing that action.

\section{A Contextual CFA for Brane Calculi}
\label{CFA}
   
We present an extension of the Control Flow Analysis (CFA), introduced in \cite{BBC09} 
for analysing system specified in Brane Calculi. 
The analysis over-approximates all the possible behaviour of a top-level system $P_{*}$.
In particular, the analysis keeps track of the possible contents of each membrane, thus
taking care of the possible modifications of the containment hierarchy due to the dynamics.
The new analysis, following \cite{newcfaBioAmb}, incorporates context in the style of 2CFA, thus increasing the precision of the approximations w.r.t.~\cite{BBC09}.
Furthermore, the analysis exploits some causality information to further reduce the degree of approximation.
A localised approximation of the contents of a membrane or {\it  estimate} ${\cal I}$ is defined as follows: 
$$
{\cal I} \subseteq {\bf M} \times {\bf M} \times {\bf M} \times ({\bf M} \cup \Xi_{MBD})
$$
\noindent 
Here, $\mu_s \in {\cal I}(\mu_{gp},\mu_p,\mu)$ (that is $(\mu_{gp},\mu_p,\mu,\mu_s) \in {\cal I}$) means that the membrane identified by $\mu$ may surround the membrane identified by $\mu_s$, 
whenever $\mu$ is surrounded by $\mu_p$ and $\mu_p$ is surrounded by $\mu_{gp}$.
The outermost membranes $\mu_{gp}\mu_p$ represent what is called the {\it  context} and that amounts to $**$ when the analysed membrane is at top-level.
Moreover, $a \in {\cal I}(\mu_{gp},\mu_p,\mu)$ means that the action $a$ may reside on and affect the membrane identified by $\mu$, in the context $\mu_{gp}\mu_p$.
Furthermore, the analysis collects two types of some causality information:
\begin{itemize}
\item An approximation of the possible causal circumstances in which a membrane can arise:
$$ 
{\cal C} \subseteq (\Xi_{MBD} \times {\bf M} \times \Xi_{MBD} \times {\bf M} \times {\bf M} \times {\bf M} \times {\bf M}) \uplus 
(\Xi_{MBD} \times {\bf M} \times {\bf M} \times {\bf M} \times {\bf M})
$$
\noindent 
Here  
$(a_n,\mu_P,a_n^{\bot},\mu_Q,\mu_{gp},\mu_p,\mu) \in {\cal C}(\mu_c)$ means that the 
membrane $\mu_c$ can be {\it  causally derived} by the
firing of the action $a_n$ in $\mu_P$ and the coaction $a_n^{\bot}$ in $\mu_Q$, in the
context $\mu_{gp}\mu_p,\mu$.
Similarly, $(a_n,\mu_P,\mu_{gp},\mu_p,\mu) \in {\cal C}(\mu_c)$ for an action $a_n$ like $drip$, without
a co-action.
\item
An approximation of the possible membrane incompatibilities:
$$
{\cal R} \subseteq ({\bf M} \times {\bf M} \times {\bf M}) \times ({\bf M} \times {\bf M} \times {\bf M}) 
$$
\noindent
Here, $((\mu_{gp},\mu_p,\mu), (\mu'_{gp},\mu'_p,\mu)) \in {\cal R}$ means that the membrane $\mu$ in the context $\mu_{gp}\mu_p$ 
cannot interact with the membrane $\mu$ in the context $\mu'_{gp}\mu'_p$, because 
the second membrane is obtained from the first and the first one is dissolved.
\end{itemize}
Note that ${\cal C}$ and ${\cal R}$ are two strict order relations, thus only transitivity property holds.
To validate the correctness of a proposed {estimate} ${\cal I}$, 
we state a set of clauses operating upon judgements 
like 
$\FORMb{\mu_{gp} \mu_p \mu}{P}$.
This judgement
expresses that when the subprocess $P$ of $P_{*}$ is enclosed within a membrane identified 
by $\mu \in {\bf M}$, in the context 
$\mu_{gp}\mu_p \in {\bf M} \times {\bf M} $, then 
${\cal I}$ correctly captures the behaviour of $P$, i.e.~the estimate 
is valid also for all the states $Q$
passed through a computation of $P$.

The analysis is specified in two phases.
First, it checks that ${\cal I}$ describes the initial process. This is done
in the upper part of Table~\ref{analysis}, where the clauses
amount to a syntax-driven structural traversal of process specification.
The clauses rely on the auxiliary function $A$ that collects all the actions in a membrane process 
$\sigma$ and that is reported at the beginning of Table~\ref{analysis}.
Note that the actions collected by $A$, e.g., in $\sigma = \sigma_0.\sigma_1$ are equal to the ones in 
$\sigma' = \sigma_0|\sigma_1$, witnessing
the fact that here the analysis introduces some imprecision and approximation.
he clause for membrane system $\sigma \langle P \rangle^{\mu_s}$ checks that whenever a
membrane ${\mu_s}$ is introduced inside a membrane $\mu$, in the context $\mu_{gp},\mu_p$
the relative hierarchy position must be reflected in
${\cal I}$, i.e.~$\mu_s \in{\cal I}(\mu_{gp},\mu_p,\mu)$.
Furthermore, the actions in $\sigma$ that affect the membrane $\mu_s$ and that are collected in 
$A(\sigma)$, are recorded in ${\cal I}(\mu_p,\mu,\mu_s)$.
Finally, when inspecting the content $P$, the fact that the enclosing membrane is $\mu_s$ in the context $\mu_p\mu$ is recorded, as reflected by the 
judgement $\FORMb{\mu_p \mu \mu_s}{P}$.
The rule for $\diamond$ does not restrict the analysis result, 
while the rules for parallel composition $\circ$, and replication $!$ ensure that the analysis also holds for the immediate sub-systems, by ensuring their traversal.
In particular, note that the analysis of $!P$ is equal to the one of $P$. This is another source of imprecision.
\begin{table}[htb]
\begin{center}
{
\begin{tabular}{|c|}  \hline
\mbox{$\begin{array}{llll}
 \\[.01ex]
{A}(0) = \emptyset  \ \ & \ \
{A}(a.\sigma) = \{a\} \cup {A}(\sigma)
\ \ & \ \
{A}(! \sigma) = {A}(\sigma)
\ \ & \ \
{A}(\sigma_0|\sigma_1) = {A}(\sigma_0) \cup {A}(\sigma_1)\\[1ex]
\end{array} $} 
 \\ \hline
\mbox{$\begin{array}{lll}
 \\[.01ex]
\FORMb{\mu_{gp} \mu_p \mu}{\diamond} & {\sf{iff}} &
{\mathit true}
\\
\FORMb{\mu_{gp} \mu_p \mu}{P \circ Q} & {\sf iff} &
\FORMb{\mu_{gp} \mu_p \mu}{P} \ \wedge \
\FORMb{\mu_{gp} \mu_p \mu}{Q}
\\
\FORMb{\mu_{gp} \mu_p \mu}{! P} & {\sf iff} &
\FORMb{\mu_{gp} \mu_p \mu}{P} 
\\
\FORMb{\mu_{gp} \mu_p \mu}{\sigma \langle P \rangle^{\mu_s}} & {\sf iff} &
\mu_s \in {\cal I}(\mu_{gp},\mu_p, \mu) \ \wedge \ 
A(\sigma) \subseteq {\cal I}(\mu_p, \mu,\mu_s)
\ \wedge \
\FORMb{\mu_p, \mu,\mu_s}{P}
\\[2ex]
\end{array} $} \\ 

\mbox{$\begin{array}{ll}
(Mate) & \ mate_n \in {\cal I}(\mu_p,\mu,\mu_{P})
\wedge
mate_n^{\bot} \in {\cal I}(\mu_p,\mu,\mu_{Q}) \wedge  \mu_{P},\mu_{Q} \in {\cal I}(\mu_{gp},\mu_p,\mu) \ \wedge \\
& ((\mu_p,\mu,\mu_{P}),(\mu_p,\mu,\mu_{Q})) \not\in \mathcal{R}\\
&
\Rightarrow \mu_{PQ}  \in {\cal I}(\mu_{gp},\mu_p,\mu) \ \mbox{ where }
\mu_{PQ} = {\bf MI_{mate}}(mate_n,\mu_P,mate_n^{\bot},\mu_Q,\mu_{gp},\mu_p,\mu) \  \wedge \\
&
{\cal I}(\mu_p,\mu,\mu_{P})  \subseteq  {\cal I}(\mu_p,\mu,\mu_{PQ}) \wedge
{\cal I}(\mu,\mu_{P})  \subseteq  {\cal I}(\mu,\mu_{PQ}) \wedge
{\cal I}(\mu_{P})  \subseteq  {\cal I}(\mu_{PQ})
\\
&
{\cal I}(\mu_p,\mu,\mu_{Q}) \subseteq {\cal I}(\mu_p,\mu,\mu_{PQ}) \wedge
{\cal I}(\mu,\mu_{Q})  \subseteq  {\cal I}(\mu,\mu_{PQ}) \wedge
{\cal I}(\mu_{Q})  \subseteq  {\cal I}(\mu_{PQ})
\\
& \ \wedge \ (mate_n,\mu_P,mate_n^{\bot},\mu_Q,\mu_{gp},\mu_p,\mu) \in {\cal C}(\mu_{PQ}) \ \wedge \\
&
((\mu_p,\mu,\mu_{P}), (\mu_p,\mu,\mu_{PQ})),  ((\mu_p,\mu,\mu_{Q}), (\mu_p,\mu,\mu_{PQ}))
\in \mathcal{R} \ \wedge \\\
&
((\mu,\mu_{P}), (\mu,\mu_{PQ})),  ((\mu,\mu_{Q}), (\mu,\mu_{PQ}))
\in \mathcal{R}
\\[2ex]
(Bud) & \ bud_n \in {\cal I}(\mu, \mu_{Q},\mu_{P})
\wedge
bud_n^{\bot}(\rho) \in {\cal I}(\mu_p, \mu, \mu_{Q}) \wedge \mu_{P} \in {\cal I}(\mu_p,\mu,\mu_{Q}) \wedge \mu_{Q} \in {\cal I}(\mu_{gp},\mu_p,\mu)
\\
&
\Rightarrow
\mu_{R} \in {\cal I}(\mu_{gp},\mu_p,\mu) \mbox{ where }
\mu_{R} = {\bf MI_{bud}}(bud_n,\mu_P,bud_n^{\bot}(\rho), \mu_Q,\mu_{gp},\mu_p,\mu)
\ \wedge \\
& A(\rho) \subseteq {\cal I}(\mu_p,\mu,\mu_R)
 \wedge \mu_P \in {\cal I}(\mu_p,\mu,\mu_R) \ \wedge  \\
 &
 {\cal I}(\mu, \mu_{Q},\mu_{P}) \subseteq {\cal I}(\mu, \mu_{R},\mu_P) \ \wedge \
 {\cal I}(\mu_{Q},\mu_{P}) \subseteq {\cal I}(\mu_{R},\mu_P) \
\\
& \wedge  \ (bud_n,\mu_P,bud_n^{\bot}(\rho),\mu_Q,\mu_{gp},\mu_p,\mu) \in {\cal C}(\mu_{R})\\
& \ \wedge \ ((\mu, \mu_{Q},\mu_{P}), (\mu, \mu_{R},\mu_{P})) \in \mathcal{R} 
\ \wedge \ ((\mu_{Q},\mu_{P}), (\mu_{R},\mu_{P})) \in \mathcal{R}
\\[2ex]
(Drip) & \ drip(\rho) \in {\cal I}(\mu_p, \mu, \mu_{P})
\wedge
\mu_{P} \in {\cal I}(\mu_{gp},\mu_p,\mu) 
\\
&
\Rightarrow 
\mu_{R} \in {\cal I}(\mu_{gp},\mu_p,\mu) 
\mbox{ where }
\mu_{R} = {\bf MI_{drip}}(drip(\rho),\mu_P,\mu_{gp},\mu_p,\mu) \ \wedge \ A(\rho) \subseteq {\cal I}(\mu_p,\mu,\mu_R)  \\
& \wedge \ (drip(\rho),\mu_P,\mu_{gp},\mu_p,\mu) \in {\cal C}(\mu_{R})
\\
[2ex]
\end{array} $} \\ \hline
\end{tabular}
}
\end{center}
\caption{Analysis for Brane Processes}
\label{analysis}
\end{table}

Secondly, the analysis checks that ${\cal I}$ also takes into account the dynamics of the process under consideration; in particular, the dynamics of the containment hierarchy of membranes.
This is expressed by
the closure conditions in the lower part of Table \ref{analysis} that
mimic the semantics, 
by modelling, without exceeding the precision boundaries of the analysis,
the semantic preconditions
and the consequences of the possible actions.
More precisely, each precondition checks whether a pair of complementary actions could possibly enable the firing of a transition according to ${\cal I}$.
The conclusion imposes the additional requirements on ${\cal I}$ that are necessary to give a valid prediction of the analysed action.

Consider e.g., the clause for $(Mate)$ (the other clauses are similar).
If (i) there exists an occurrence of a mate action: $mate_n \in {\cal I}(\mu_p,\mu,\mu_{P})$;
(ii)  there exists an occurrence of the corresponding co-mate action:
$mate^{\bot}_n \in {\cal I}(\mu_p,\mu,\mu_{Q})$;
(iii) the corresponding membranes are siblings: $\mu_{P},\mu_{Q} \in {\cal I}(\mu_{gp},\mu_p,\mu)$,
(iv) the redexes are not incompatible, 
i.e.~the corresponding membranes can interact: $((\mu_p,\mu,\mu_{P}),(\mu_p,\mu,\mu_{Q}))\not\in\mathcal{R}$
then the conclusion of the clause expresses the
effects of performing the transition $(Mate)$.
In this case, we have that ${\cal I}$ must reflect that 
(i) there may exist a membrane $\mu_{PQ}$ inside $\mu$, in the context $\mu_{gp}\mu_p$, at the same nesting level of the membranes $\mu_{P}$ and $\mu_{Q}$; and 
(ii) the contents of $\mu_{P}$ and of $\mu_{Q}$, their children and their grandchildren, may also be included in $\mu_{PQ}$.
Note that the contribution changes depending on whether we consider $\mu_{P}$ ($\mu_{Q}$, resp.), their children or their grandchildren.
With the inclusion ${\cal I}(\mu,\mu_P) \subseteq {\cal I}(\mu,\mu_{PQ})$ we mean that
for each $\mu_s$ in the context
$\mu,\mu_P$, all the elements in  
${\cal I}(\mu,\mu_P,\mu_s)$ are included in ${\cal I}(\mu,\mu_{PQ},\mu_s)$.
Similarly, with ${\cal I}(\mu_P) \subseteq {\cal I}(\mu_{PQ})$ we mean that
for each $\mu_{gs}$ in the context $\mu_P\mu_s$, and in turn for each $\mu_s$ in the context 
$\mu,\mu_P$, all the 
elements in ${\cal I}(\mu_P,\mu_s, \mu_{gs})$ belong to ${\cal I}(\mu_{PQ},\mu_s, \mu_{gs})$.
We use a similar notation for the relation ${\cal R}$.
(iii)
The membrane $\mu_{PQ}$ is the result of the transition $(Mate)$, performed by the two membranes $\mu_{P}$ and $\mu_{Q}$, in the context $\mu_{gp}\mu_p\mu$, 
as witnessed by the corresponding entry in the component ${\cal C}$;
(iv) the new membrane $\mu_{PQ}$ is $incompatible$ with the $\mu_{P}$ and $\mu_{Q}$, because $\mu_{PQ}$, derived by the transition $(Mate)$, follows both
$\mu_{P}$ and $\mu_{Q}$.
Note the similar incompatibility between the membrane $\mu_P$ in the context $\mu \mu_{Q}$ before the $(Bud)$ transition and the derived one
$\mu_P$ in the context $\mu \mu_{R}$.
The above requirements correspond to the application of the semantic rule $(Mate)$ that would result in the fusion of the two membranes.

Note that, since the new membrane $\mu_{PQ}$ inherits the prefix actions
that affected the membranes $\mu_{P}$ and $\mu_{Q}$, it inherits also $mate_n$ and $mate^{\bot}_n$ (we write in red this kind of {\it  imprecise} inclusions).
This is due to over-approximation, even though it is harmless: the two prefix actions cannot be further used to predict a communication because they both occur in 
${\cal I}(\mu_p,\mu,\mu_{PQ})$.
Still, the presence of both $mate_n \in {\cal I}(\mu_p,\mu,\mu_{P})$ and $mate^{\bot}_n \in {\cal I}(\mu_p,\mu,\mu_{PQ})$ could lead to predict another interaction that is impossible at run time.
Thanks to ${\cal R}$, we can safely exclude it, thus gaining precision.
This gain is obtained in general: ${\cal R}$ collects indeed pairs of capabilities that could be syntactically compatible with an interaction, 
but that cannot really interact, because they dynamically occur in membranes that are not simultaneously present.

The gain in precision is paid in terms of complexity: the presented analysis is rather expensive from a computational point of view, due to the introduction of contexts and to the possibly high
number of different membrane names. Both these features may lead to an explosion of the possible reachable configurations.

\begin{example}
To illustrate how our CFA work we use two simple examples. The emphasis is on the process algebraic structures and not on their biological expressiveness.
We first report an application of it to a simple process $P$, illustrated in \cite{BBC09} (and in turn taken from \cite{busi}).
We consider $P$ and the following possible computations, where $\rho_1$ and $\rho_2$ 
are not specified as they are not relevant here.

$
\begin{array}{l}
\! \! \! \! \! \! \! \! \!  P = (mate_n | bud_m^{\bot}(\rho_1)) \langle bud_m \langle \rangle^{\mu_{P_0}} \circ bud_o  \langle \rangle^{\mu_{P_1}}
 \rangle^{\mu_{P}} 
\circ  
(mate_n^{\bot} | bud_o^{\bot}(\rho_2)) \langle \rangle^{\mu_{Q}} 
 \inter{mate_n} \\
P_1 = (bud_m^{\bot}(\rho_1) | bud_o^{\bot}(\rho_2))
\langle bud_m \langle \rangle^{\mu_{P_0}} \circ bud_o \langle \rangle^{\mu_{P_1}}
\circ \diamond \rangle^{\mu_{PQ}} \inter{bud_m}\\
P_2= \rho_1 \langle \langle \rangle^{\mu_{P0}} \rangle^{\mu^{0}_{R1}} \circ 
bud_o^{\bot}(\rho_2))
\langle bud_o \langle \rangle^{\mu_{P_{1}}}
\circ \diamond \rangle^{\mu_{PQ}} \inter{bud_o}\\
P_3 = \rho_1 \langle \langle \rangle^{\mu_{P0}} \rangle^{\mu^{0}_{R1}} \circ 
\rho_2 \langle \langle \rangle^{\mu_{P1}} \rangle^{\mu_{R2}} \circ 
\langle  \diamond \rangle^{\mu_{PQ}} 
\end{array}
\normalsize
$


$
\begin{array}{l}
\! \! \! \! \! \! \! \! \!  P = (mate_n | bud_m^{\bot}(\rho_1)) \langle bud_m \langle \rangle^{\mu_{P_0}} \circ bud_o  \langle \rangle^{\mu_{P_1}}
 \rangle^{\mu_{P}} 
\circ  
(mate_n^{\bot} | bud_o^{\bot}(\rho_2)) \langle \rangle^{\mu_{Q}} 
 \inter{bud_m} \\
P'_1 = \rho_1 \langle \langle \rangle^{\mu_{P0}} \rangle^{\mu^{1}_{R1}} \circ 
mate_n \langle \langle bud_o \rangle^{\mu_{P_1}} \rangle^{\mu_P} \circ (mate_n^{\bot} | bud_o^{\bot}(\rho_2)) \langle \rangle^{\mu_{Q}} 
\inter{mate_m}\\
P'_2 = \rho_1 \langle \langle \rangle^{\mu_{P0}} \rangle^{\mu^{1}_{R1}} \circ 
bud_o^{\bot}(\rho_2) \langle \langle bud_o \rangle^{\mu_{P_1}}  \rangle^{\mu_{PQ}} \inter{bud_o} \\
P'_3 =  \rho_1 \langle \langle \rangle^{\mu_{P0}} \rangle^{\mu^{1}_{R1}} \circ 
\rho_2 \langle \langle \rangle^{\mu_{P1}} \rangle^{\mu_{R2}} \circ 
\langle  \diamond \rangle^{\mu_{PQ}} 
\end{array}
\normalsize
$

\begin{table}
\begin{center}
{
\small
\begin{tabular}{|c|} \hline
\mbox{$\begin{array}{ll}
\mu_{P_0},\mu_{P_1} \in {\cal I}(*,*,\mu_{P}),
\mu_{P},\mu_{Q}\in {\cal I}(*,*,*) &
\mu_{P_0},\mu_{P_1} \in {\cal I}(*,*,\mu_{PQ}), \mu_{PQ} \in {\cal I}(*,*,*)
\\
\mu^{i}_{R1} \in {\cal I}(*,*,*), \mu_{P_0} \in {\cal I}(*,*,\mu^{i}_{R1}) \ (i = 0,1)
&
\mu_{R2} \in {\cal I}(*,*,*), \mu_{P_1} \in {\cal I}(*,*,\mu_{R2})
\\
\begin{array}{lllllll}
mate_n & \in & {\cal I}(*,*,\mu_{P}), & \textcolor{red}{{\cal I}(*,*,\mu_{PQ})},\\
mate_n^{\bot} & \in & {\cal I}(*,*,\mu_{Q}), & \textcolor{red}{{\cal I}(*,*,\mu_{PQ})}, \\
bud_m & \in & {\cal I}(*,\mu_P,\mu_{P_0}), & {\cal I}(*,\mu_{PQ},\mu_{P_0}), & \textcolor{red}{{\cal I}(*,\mu^{i}_{R1},\mu_{P_0})},\\
bud_m^{\bot}(\rho_1) & \in & {\cal I}(*,*,\mu_{P}), & {\cal I}(*,*,\mu_{PQ}), & \\
bud_o & \in & {\cal I}(*, \mu_P, \mu_{P_1}), & {\cal I}(*,\mu_{PQ}, \mu_{P_1}), & \textcolor{red}{{\cal I}(*,\mu_{R2},\mu_{P_1})},\\
bud_o^{\bot}(\rho_2) & \in & {\cal I}(*,*,\mu_{Q}), & {\cal I}(*,*,\mu_{PQ}),   &
\end{array}
\\
((*,*,\mu_{P}), (*,*,\mu_{PQ})) \in {\cal R} 
&
((*,*,\mu_{Q}),(*,*,\mu_{PQ})) \in {\cal R} 
\\
((*,\mu_{P},\mu_{P_{0}}), (*,\mu_{PQ},\mu_{P_{0}})) \in {\cal R} 
& ((*,\mu_{P},\mu_{P_{1}}), (*,\mu_{PQ},\mu_{P_{1}})) \in {\cal R}
\\
((*,\mu_{P},\mu_{P_{0}}),(*,\mu^{i}_{R1},\mu_{P_{0}}) \in {\cal R}
&
((*,\mu_{P},\mu_{P_1}),(*,\mu_{R2},\mu_{P_1})) \in  {\cal R} 
\\
(mate_n,\mu_P,mate_{n}^{\bot},\mu_Q,*,*,*) \in {\cal C}(\mu_{PQ})
&
(bud_0,\mu_{P_1},bud^{\bot}_o,\mu_{PQ},*,*,*) \in {\cal C}(\mu_{R2})\\

(bud_m,\mu_{P_0},bud^{\bot}_m,\mu_{PQ},*,*,*)\in {\cal C}(\mu^{i}_{R1}) &
\end{array} $} \\ \hline
\end{tabular}
}
\end{center}
\caption{Some entries of the Example 1 Analysis}
\label{ex}
\end{table}
\noindent
The main entries of the analysis are reported in Table \ref{ex}, 
where $**$ identifies the ideal outermost context in which the system top-level membranes are.
We write in red the entries due to approximations, but not reflecting the dynamics.
Furthermore, we pair the inclusions of actions and of the corresponding co-actions, in order to emphasise which are the pairs of prefixes that lead to the prediction
of a possible communication.
It is easy to check that ${\cal I}$ is a valid estimate by following the two stage procedure explained above.

To understand in which way the ${\cal R}$ component refines the analysis, note that
since the analysis entries include
$\textcolor{red}{mate_n \in {\cal I}(*,*,\mu_{PQ})}$ and $mate_n^{\bot}\in {\cal I}(*,*,\mu_{Q})$, without the check on the ${\cal R}$ component,
we can predict a transition between the two membranes $\mu_{PQ}$ and $\mu_{P}$. This transition is not possible instead, because
$\mu_{PQ}$ is causally derived by $\mu_{P}$.

Note that although the CFA offers in general an over-approximation of the possible dynamic behaviour, in this example the result is rather precise.
The transition $mate_n$ is predicted as possible, since 
its precondition requirements are satisfied.
Indeed, we have that $mate_n \in {\cal I}(*,*,\mu_{P})$ 
$mate_n^{\bot} \in {\cal I}(*,*,\mu_{Q})$,  and $\mu_P$ and $\mu_Q$ are sibling 
and $causally \ compatible$ membranes. 
Also the transition on $bud_m$ is initially possible and this result is actually predicted by the analysis, since $bud_{m}\in {\cal I}(*, \mu_P,\mu_{P_{0}})$ and 
$bud_m^{\bot} \in {\cal I}(*, *,\mu_P)$, 
with $\mu_{P_{0}} \in {\cal I}(*, *,\mu_P)$, i.e. $P$ is the father of $P_0$.
Instead, we can observe that the transition on $bud_o$ cannot be performed in the initial system.
Indeed,  $bud_o$ resides on the membrane $\mu_{P_{1}}$ in the context $*\mu_P$, while
the coaction  $bud_o^{\bot}$ resides on $\mu_Q$ that is not the father of $\mu_{P_{1}}$.
The transition on $bud_o$ can be performed instead in the membrane $\mu_{P_{1}}$ in the context $*\mu_{PQ}$, that is the membrane introduced by the previous $mate_n$ transition.

\end{example}

\begin{example}
We now apply our CFA to another process $P$, taken from \cite{busi}.
We consider $P$ and the following possible computations.

$
\begin{array}{l}
\! \! \! \! \! \! \! \! \!  P = mate_n \langle 
(mate_m | mate_o) \langle \rangle^{\mu_{P_0}} \circ mate_o^{\bot}  \langle  \rangle^{\mu_{P_1}} 
 \rangle^{\mu_{P}} 
\circ  
mate_n^{\bot}. \langle mate_m^{\bot} \langle \rangle^{\mu_{Q_0}} \rangle^{\mu_{Q}} 
 \inter{mate_n} \\
P_{1} = \langle 
(mate_m | mate_o) \langle \rangle^{\mu_{P_0}} \circ mate_o^{\bot}  \langle  \rangle^{\mu_{P_1}} 
\circ  
 mate_m^{\bot} \langle \rangle^{\mu_{Q_0}} \rangle^{\mu_{PQ}}  
  \inter{mate_m} \\
P_{2} = \langle 
mate_o \langle \rangle^{\mu_{P_0Q_0}} \circ mate_o^{\bot}  \langle  \rangle^{\mu_{P_1}} 
\rangle^{\mu_{PQ}}  
  \inter{mate_0} 
P_{3} = \langle 
\langle \rangle^{\mu_{P_0Q_0P_1}}
\rangle^{\mu_{PQ}}  
\end{array}
\normalsize
$

$
\begin{array}{l}
P_{1} = \langle 
(mate_m | mate_o) \langle \rangle^{\mu_{P_0}} \circ mate_o^{\bot}  \langle  \rangle^{\mu_{P_1}} 
\circ  
mate_m^{\bot} \langle \rangle^{\mu_{Q_0}} \rangle^{\mu_{PQ}}  
  \inter{mate_o} \\
P'_{2} = \langle 
mate_m \langle \rangle^{\mu_{P_0P_1}}\circ  
 \langle mate_m^{\bot} \langle \rangle^{\mu_{Q_0}}
\rangle^{\mu_{PQ}}  
  \inter{mate_m} 
P'_{3}= \langle 
\langle \rangle^{\mu_{P_0P_1Q_0}}
\rangle^{\mu_{PQ}}  
\end{array}
\normalsize
$

$
\begin{array}{l}
\! \! \! \! \! \! \! \! \!  P = mate_n \langle 
(mate_m | mate_o) \langle \rangle^{\mu_{P_0}} \circ mate_o^{\bot}  \langle  \rangle^{\mu_{P_1}} 
 \rangle^{\mu_{P}} 
\circ  
mate_n^{\bot}. \langle mate_m^{\bot} \langle \rangle^{\mu_{Q_0}} \rangle^{\mu_{Q}} 
 \inter{mate_o} \\
P''_{1} = mate_n \langle 
mate_m  \langle \rangle^{\mu'_{P_0P_1}} 
 \rangle^{\mu_{P}} 
\circ  
mate_n^{\bot}. \langle mate_m^{\bot} \langle \rangle^{\mu_{Q_0}} \rangle^{\mu_{Q}}   
\inter{mate_n} \\
P''_{2} =  \langle 
mate_m  \langle \rangle^{\mu'_{P_0P_1}} 
\circ  
mate_m^{\bot} \langle \rangle^{\mu_{Q_0}} \rangle^{\mu_{PQ}}   
  \inter{mate_m} 
P''_{3} = \langle 
\langle \rangle^{\mu'_{P_0P_1Q_0}} 
 \rangle^{\mu_{PQ}}     
\end{array}
\normalsize
$

\begin{table}
\begin{center}
{
\small
\begin{tabular}{|c|} \hline
\mbox{$\begin{array}{l}
\mu_P,\mu_Q,\mu_{PQ} \in  {\cal I}(*,*,*),\\
\mu_{P_0},\mu_{P_1} \in {\cal I}(*,*,\mu_{P}),{\cal I}(*,*,\mu_{PQ}),
\mu_{Q_0} \in {\cal I}(*,*,\mu_{Q}),{\cal I}(*,*,\mu_{PQ})
\\
\mu_{P_{0}Q_{0}}  \in {\cal I}(*,*,\mu_{PQ}),
\mu_{P_{0}Q_{0}P_{1}} \in {\cal I}(*,*,\mu_{PQ}),\\
\mu_{P_{0}P_{1}} \in {\cal I}(*,*,\mu_{PQ}), \mu'_{P_{0}P_{1}} \in {\cal I}(*,*,\mu_{P}), {\cal I}(*,*,\mu_{PQ}),\\
\mu_{P_{0}P_{1}Q_{0}} \in {\cal I}(*,*,\mu_{PQ}),
\mu'_{P_{0}P_{1}Q_{0}} \in {\cal I}(*,*,\mu_{PQ}),\\
\begin{array}{lllllll}
mate_n &\in& {\cal I}(*,*,\mu_{P}), \\ 
mate_n^{\bot} &\in& {\cal I}(*,*,\mu_{Q}), \\
mate_m &\in& {\cal I}(*,\mu_{P},\mu_{P_0}), & {\cal I}(*,\mu_{PQ},\mu_{P_0}),
& {\cal I}(*,\mu_{PQ},\mu_{P_0P_1}),\\
mate_m^{\bot} &\in& {\cal I}(*,\mu_{Q}, \mu_{Q_0}), &{\cal I}(*,\mu_{PQ}, \mu_{Q_0}), 
& {\cal I}(*,\mu_{PQ}, \mu_{Q_0}), \\
mate_m &\in& 
{\cal I}(*,\mu_{P},\mu'_{P_0P_1}), & {\cal I}(*,\mu_{PQ},\mu'_{P_0P_1}), &   \\ 
mate_m^{\bot} &\in& 
{\cal I}(*,\mu_{Q}, \mu_{Q_0}), & {\cal I}(*,\mu_{PQ}, \mu_{Q_0}), \\
mate_o &\in& {\cal I}(*,\mu_{P}, \mu_{P_0}), 
& {\cal I}(*,\mu_{PQ}, \mu_{P_0Q_0}), & {\cal I}(*,\mu_{PQ}, \mu_{P_0}),\\
mate_o^{\bot} &\in& {\cal I}(*,\mu_{P}, \mu_{P_1}), & {\cal I}(*,\mu_{PQ}, \mu_{P_1}) &  {\cal I}(*,\mu_{PQ}, \mu_{P_1}),
\\
\end{array}
\\
((*,*,\mu_{P}), (*,*,\mu_{PQ})) \in {\cal R} 
\\
((*,*,\mu_{Q}), (*,*,\mu_{PQ})) \in {\cal R} 
\\
((*,\mu_{P},\mu_{P_{i}}), (*,\mu_{PQ},\mu_{P_{i}})) \in {\cal R} 
\ i = 0,1
\\
(mate_n,\mu_P,mate_{n}^{\bot},\mu_Q,*,*,*) \in {\cal C}(\mu_{PQ})
\\
(mate_m, \mu_{P_0},mate^{\bot}_n,\mu_{Q_0},*,*,\mu_{PQ}) \in {\cal C}(\mu_{P_0Q_0})\\
(mate_m, \mu_{P_0P_1},mate^{\bot}_m,\mu_{Q_0},*,*,\mu_{PQ}) \in {\cal C}(\mu_{P_0P_1Q_0}) \\
(mate_o, \mu_{P_0Q_0},mate^{\bot}_o,\mu_{P_1},*,*,\mu_{PQ}) \in {\cal C}(\mu_{P_0Q_0P_1}) \\
(mate_o, \mu_{P_0},mate^{\bot}_o,\mu_{P_1},*,*,\mu_{PQ}) \in {\cal C}(\mu_{P_0P_1}) \\
(mate_o, \mu_{P_0},mate^{\bot}_o,\mu_{P_1},*,*,\mu_{P}) \in {\cal C}(\mu'_{P_0P_1})
\end{array} $} \\ \hline
\end{tabular}
}
\end{center}
\caption{Some entries of the Example 2 Analysis}
\label{ex2}
\end{table}
\noindent
The main entries of the analysis are reported in Table \ref{ex2}, where 
we do not include the entries due to approximations, but not reflecting the dynamics.
As before, we pair the inclusions of actions and of the corresponding co-actions, in order to emphasise which are the pairs of prefixes that lead to the prediction
of a possible communication. This motivates some redundancies in the entries.
Also in this example, the CFA result is rather precise.


\end{example}

\paragraph{\bf Semantic Correctness}

Our analysis is semantically correct with respect to the given semantics, i.e.~a valid estimate enjoys the following subject reduction property with respect to the semantics.
\begin{theorem}\textbf{(Subject Reduction)}\label{sbj-red}\\
If $P \rightarrow Q$ and $\FORMb{\mu_{gp} \mu_p \mu}{P}$ then also 
$\FORMb{\mu_{gp} \mu_p \mu}{Q}$.
\end{theorem}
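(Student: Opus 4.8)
The plan is to proceed by induction on the derivation of $P \rightarrow Q$, after isolating two auxiliary facts. The observation that streamlines everything is that a judgement $\FORMb{\mu_{gp}\mu_p\mu}{P}$ bundles together the syntax-directed traversal of $P$ (upper part of Table~\ref{analysis}) and the global closure conditions (lower part). The closure conditions constrain only the triple $({\cal I},{\cal C},{\cal R})$ and never mention the particular process, so they continue to hold for $Q$ as soon as they hold for $P$. Hence the entire burden is to re-establish the syntax-directed clauses for $Q$, and these constrain only ${\cal I}$. Throughout, the context $\mu_{gp}\mu_p\mu$ is treated as universally quantified, so the induction hypothesis may be instantiated at whatever context a sub-derivation requires.

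Before the induction I would prove two lemmas. The \emph{congruence lemma} states that $P \equiv P'$ and $\FORMb{\mu_{gp}\mu_p\mu}{P}$ imply $\FORMb{\mu_{gp}\mu_p\mu}{P'}$; it is established by induction on the derivation of $\equiv$, checking each monoidal law, each replication law (using $A(!\sigma)=A(\sigma)$ and that the clause for $!P$ coincides with that for $P$), and the law $0\langle\diamond\rangle^{\mu}\equiv\diamond$. The \emph{content-transfer lemma} is the technical core: if $\FORMb{\mu_p\mu\nu}{R}$ holds and the three inclusions ${\cal I}(\mu_p,\mu,\nu)\subseteq{\cal I}(\mu_p,\mu,\nu')$, ${\cal I}(\mu,\nu)\subseteq{\cal I}(\mu,\nu')$ and ${\cal I}(\nu)\subseteq{\cal I}(\nu')$ hold, then $\FORMb{\mu_p\mu\nu'}{R}$. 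This is proved by structural induction on $R$. The decisive point is that the context tuple has fixed depth three, so the identity being replaced occurs in the context of sub-membranes of $R$ at only three nesting levels; the three inclusions supply exactly the containments and action sets needed at those levels, and from the fourth level inward the context has ``slid past'' the replaced identity, so nothing further is required. A symmetric variant, in which the replaced identity sits at the parent position and the inclusions are ${\cal I}(\mu,\nu,\cdot)\subseteq{\cal I}(\mu,\nu',\cdot)$ and ${\cal I}(\nu,\cdot)\subseteq{\cal I}(\nu',\cdot)$, follows by the same argument and is what the $(Bud)$ case will use.

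With these in hand the induction is routine for the structural rules. For $(Par)$, the clause for $\circ$ is an \emph{iff}, so $\FORMb{\mu_{gp}\mu_p\mu}{P'\circ R}$ gives $\FORMb{\mu_{gp}\mu_p\mu}{P'}$ and $\FORMb{\mu_{gp}\mu_p\mu}{R}$; the induction hypothesis upgrades the first conjunct to $Q'$ and the clause reassembles the result. For $(Brane)$, the membrane clause exposes $\FORMb{\mu_p\mu\mu_s}{P'}$; the induction hypothesis, instantiated at the deeper context $\mu_p\mu\mu_s$, yields $\FORMb{\mu_p\mu\mu_s}{Q'}$, and the unchanged conjuncts $\mu_s\in{\cal I}(\mu_{gp},\mu_p,\mu)$ and $A(\sigma)\subseteq{\cal I}(\mu_p,\mu,\mu_s)$ rebuild the conclusion. $(Struct)$ is discharged immediately by the congruence lemma applied around the inner reduction. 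For the axioms I would, in each case, read off from the syntax-directed validity of the redex exactly the preconditions of the matching closure condition, fire it to obtain the ${\cal I}$-entries for the generated membrane ($\mu_{PQ}$, $\mu_R$), and then assemble the syntax-directed validity of the contractum. $(Drip)$ needs no transfer, since $\mu_P$ is untouched and $\mu_R$ is empty; $(Mate)$ and $(Bud)$ invoke the content-transfer lemma to move the contents of $\mu_P$ (and $\mu_Q$) onto the new identity, the action-set obligations following from the inclusions on ${\cal I}(\mu_p,\mu,\cdot)$ together with $A(a.\sigma)\supseteq A(\sigma)$.

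I expect the base cases to be the main obstacle, and $(Bud)$ the hardest of them, because there the nesting is genuinely reorganised: the content of $\mu_P$ must be re-parented from $\mu_Q$ to the fresh $\mu_R$, which is precisely why the parent-position variant of the content-transfer lemma, fed by the closure inclusions ${\cal I}(\mu,\mu_Q,\mu_P)\subseteq{\cal I}(\mu,\mu_R,\mu_P)$ and ${\cal I}(\mu_Q,\mu_P)\subseteq{\cal I}(\mu_R,\mu_P)$, is indispensable. A secondary delicate point, common to $(Mate)$ and $(Bud)$, is checking that the ``$\not\in{\cal R}$'' precondition is met by the actual redex: this holds because ${\cal R}$ records only the incompatibility between a membrane and a strictly later, causally-derived version of itself, so two distinct membranes that are simultaneously present and interacting in $P$ are never ${\cal R}$-related. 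Finally, I would note that the imprecise re-inclusion of $mate_n$ and $mate_n^{\bot}$ into ${\cal I}(\mu_p,\mu,\mu_{PQ})$ is harmless for the argument: it only enlarges ${\cal I}$, and enlarging ${\cal I}$ can never invalidate the membership- and subset-obligations that constitute the syntax-directed clauses for $Q$.
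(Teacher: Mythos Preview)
Your proposal is correct and follows essentially the same route as the paper: induction on the reduction derivation, with the congruence lemma (the paper's Lemma~\ref{congr}) and a content-transfer lemma (the paper's Proposition~\ref{Afatto}) as auxiliary results, the structural rules discharged by the induction hypothesis, and the axioms discharged by reading off the closure preconditions and applying the transfer lemma. Your statement of the transfer lemma is in fact more careful than the paper's (the paper states only the single inclusion ${\cal I}(\mu_p,\mu,\mu_1)\subseteq{\cal I}(\mu_p,\mu,\mu_2)$ but its proof tacitly uses the deeper-level inclusions you make explicit), and you also articulate the $\not\in{\cal R}$ check and the parent-position variant for $(Bud)$ that the paper leaves implicit.
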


This result depends on the fact that
analysis is invariant under the structural congruence, as stated below.

\begin{lemma}\textbf{(Invariance of Structural Congruence)}\label{congr}
If $P \equiv Q$ and we have that  $\FORMb{\mu_{gp} \mu_p \mu}{P}$  then also 
$\FORMb{\mu_{gp} \mu_p \mu}{Q}$.
\end{lemma}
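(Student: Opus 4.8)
The plan is to prove the statement by structural induction on the derivation of $P \equiv Q$, following the standard recipe for this kind of invariance lemma. Since $\equiv$ is defined as the least congruence satisfying the clauses of Table~\ref{structcong}, I would treat each generating clause as a base case and each congruence-closure rule as an inductive step. To make the induction symmetric I would actually prove the biconditional $\FORMb{\mu_{gp}\mu_p\mu}{P} \iff \FORMb{\mu_{gp}\mu_p\mu}{Q}$ for each axiom, because $\equiv$ is symmetric and several clauses (e.g.\ $!P \equiv P \circ\, !P$) are stated in a directed form whose converse is also needed; proving both directions at once avoids duplicating the work.

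The heart of the argument is a sequence of routine but mechanical verifications, one per equational axiom, each reducing to unfolding the analysis clauses of Table~\ref{analysis} on both sides and checking they impose the same constraints on $\cal I$ (and, where relevant, on $\cal C$ and $\cal R$). The monoid laws for $\circ$ follow immediately because the clause for $P \circ Q$ is just the conjunction $\FORMb{\mu_{gp}\mu_p\mu}{P} \wedge \FORMb{\mu_{gp}\mu_p\mu}{Q}$, so associativity, commutativity and the neutrality of $\diamond$ (whose clause is $\mathit{true}$) are inherited directly from the corresponding properties of $\wedge$. The replication laws are the interesting ones: the clause for $!P$ is defined to coincide with that for $P$, so $\FORMb{\mu_{gp}\mu_p\mu}{!P} \iff \FORMb{\mu_{gp}\mu_p\mu}{P}$ holds by definition, and this single fact discharges $!!P \equiv !P$, $!\diamond \equiv \diamond$, $!(P\circ Q)\equiv !P \circ !Q$ and, crucially, the unfolding law $!P \equiv P \circ\, !P$ (both sides collapse to $\FORMb{\mu_{gp}\mu_p\mu}{P}$). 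The clause $0\langle\diamond\rangle^{\mu_s}\equiv\diamond$ requires checking that validating $0\langle\diamond\rangle^{\mu_s}$ imposes no essential constraint beyond $\mathit{true}$: here $A(0)=\emptyset$ so the $A(\sigma)\subseteq\cal I$ conjunct is vacuous, $\mu_s\in\cal I$ is a membership that the validity of the left side supplies, and the recursive obligation on $\diamond$ is again $\mathit{true}$, so both sides are equivalent to $\mathit{true}$.

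For the membrane-process side I would handle the congruence $\sigma|\tau$, $!\sigma$ and $a.\sigma$ by observing that the analysis never recurses structurally into $\sigma$ itself; it only consults $\sigma$ through the auxiliary function $A$. Hence it suffices to establish the purely syntactic fact that $A$ is invariant under $\equiv$ on membranes, i.e.\ $\sigma \equiv \tau \Rightarrow A(\sigma)=A(\tau)$, proved by the same induction over the membrane axioms: $A(!\sigma)=A(\sigma)$, $A(\sigma|\tau)=A(\sigma)\cup A(\tau)$, and $A(0)=\emptyset$ make all the membrane monoid and replication laws immediate. The congruence-closure steps are then straightforward: for each rule of the form ``$P\equiv Q$ implies $C[P]\equiv C[Q]$'' I apply the induction hypothesis to the premise and push it through the relevant analysis clause, using in the $\sigma\langle P\rangle^{\mu_s}$ case both the induction hypothesis on the contents $P\equiv Q$ and the $A$-invariance on the membranes $\sigma\equiv\tau$.

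The main obstacle I anticipate is not conceptual but bookkeeping: the clauses carry the context triple $\mu_{gp}\mu_p\mu$ and, for membranes, shift it (the content $P$ of $\sigma\langle P\rangle^{\mu_s}$ is analysed in context $\mu_p\mu\mu_s$). I must therefore be careful that every invocation of the induction hypothesis is made at exactly the context in which the subterm is analysed, so the induction should be stated with the context triple universally quantified rather than fixed. The only genuinely delicate point is confirming that the $\cal C$ and $\cal R$ components play no role here: those appear solely in the dynamic closure conditions (the $(Mate)$, $(Bud)$, $(Drip)$ clauses), never in the syntax-driven judgement $\FORMb{\mu_{gp}\mu_p\mu}{\cdot}$ whose invariance we are proving, so they can be left untouched throughout, and I would note this explicitly to close the argument.
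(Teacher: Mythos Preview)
Your approach is essentially the paper's: a case analysis over the axioms and congruence rules of $\equiv$, relying on the auxiliary fact that $A$ is invariant under membrane congruence (the paper isolates this as a separate proposition and then handles the compatibility rule for $\sigma\langle P\rangle^{\mu_s}$ exactly as you describe). Your write-up is considerably more explicit than the paper's two-case sketch, but the strategy and the key lemma coincide.

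One small slip worth flagging: in the clause $0\langle\diamond\rangle^{\mu_s}\equiv\diamond$ you assert that ``both sides are equivalent to $\mathit{true}$''. That is not so: $\FORMb{\mu_{gp}\mu_p\mu}{0\langle\diamond\rangle^{\mu_s}}$ still carries the obligation $\mu_s\in{\cal I}(\mu_{gp},\mu_p,\mu)$, which the judgement for $\diamond$ does not supply. The forward direction (garbage-collecting the empty membrane) is unproblematic, but the symmetric direction (conjuring a fresh $0\langle\diamond\rangle^{\mu_s}$ from $\diamond$) would require that $\mu_s$ already be recorded in ${\cal I}$. The paper glosses over this point as well, and in practice it is harmless because labels are drawn from the finite set fixed by the initial process and the ${\bf MI}$ functions; but your sentence as written overstates the equivalence.
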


Moreover, it is possible to prove that there always exists a least estimate (see \cite{BBC09} for a
similar statement and proof).

\section{CFA for Spatial Structure Properties}
\label{spatial}

Control Flow Analysis provides indeed a {\it  safe over-approximation} of the {\it  exact} behaviour of a system, that is, at least all the valid behaviours are captured. More precisely, all those events that the analysis does not consider as possible will {\it  never} occur. On the other hand, the set of events deemed as possible may, or may not, occur in the actual dynamic evolution of the system.
The 2CFA gains precision w.r.t~the 0CFA presented in \cite{BBC09} and the incompatibility relation ${\cal R}$ increases this gain.
 In the next section, we will discuss on the contribution of the component ${\cal C}$.

We can exploit our analysis to check {\it  spatial structure} properties, of the membranes included in the system under consideration.
In particular, because of over-approximation, we can ask negative questions like whether:
(i) a certain interaction capability $c$ {\it  never affects} the membrane labelled $\mu$, i.e.~it never occurs in the 
membrane process of the membrane labelled $\mu$;
(ii) the membrane labelled $\mu$ {\it  never ends up} in the membrane labelled $\mu'$.

Suppose we have all the possible labels $\mu$ of the possible membranes arising at run time. Then we can precisely define
the above informally introduced properties.
We first give the definition of the dynamic property, then the corresponding static property and, finally,
we show that the static property implies the dynamic one.
For each static property, we check for a particular content in the component ${\cal I}$. 

\begin{definition}[Dynamic: c never on $\mu$]
Given a process $P$ including a membrane labelled $\mu$, we say that the capability $c$ {\it  never affects} the membrane labelled $\mu$
if there not exists a derivative $Q$ such that $P \rightarrow^* Q$, in which the capability $c$ can affect the membrane labelled $\mu$.
\end{definition}

\begin{definition}[Static: c never on $\mu$]
Given a process $P$ including a membrane labelled $\mu$, we say that the capability $c$ {\it  never appears on} the membrane 
labelled $\mu$ if and only if there exists an estimate $({\cal I}, {\cal R},{\cal C})$ such that:
$c \not\in {\cal I}(\mu_{gp},\mu_p,\mu)$ for each possible context $\mu_{gp}\mu_p$.
\end{definition}

\begin{theorem}
Given a process $P$ including a membrane labelled $\mu$, then
if $c$ never appears on the membrane 
labelled $\mu$, then 
the capability $c$ never affects the membrane labelled $\mu$.
\end{theorem}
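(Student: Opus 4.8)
The plan is to argue by contraposition, transporting validity of the estimate along the reduction sequence with the Subject Reduction theorem and then reading off a contradiction from the membrane clause of Table~\ref{analysis}. Assume the static property holds, so there is an estimate $({\cal I},{\cal R},{\cal C})$ that is valid for $P$, i.e.~$\FORMb{* * *}{P}$, and such that $c \notin {\cal I}(\mu_{gp},\mu_p,\mu)$ for every context $\mu_{gp}\mu_p$. Suppose, towards a contradiction, that the dynamic property fails: there is a derivative $Q$ with $P \rightarrow^* Q$ in which the capability $c$ can affect the membrane labelled $\mu$. By the reading of ${\cal I}$ fixed in Section~\ref{CFA}, where $a \in {\cal I}(\mu_{gp},\mu_p,\mu)$ means that $a$ may reside on and affect $\mu$, the phrase ``$c$ affects $\mu$'' is pinned down to the syntactic condition that $Q$ contains, in some context $\mu_{gp}\mu_p$, an occurrence $\sigma\langle R\rangle^{\mu}$ of the target membrane whose membrane process $\sigma$ has $c$ among its prefixes, that is $c \in A(\sigma)$.

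First I would propagate validity from $P$ to $Q$. Unfolding $P \rightarrow^* Q$ as a finite chain of one-step reductions and applying Theorem~\ref{sbj-red} once per step (the $(Struct)$ steps being absorbed via Lemma~\ref{congr}, on which Subject Reduction itself rests), an induction on the length of the chain gives $\FORMb{* * *}{Q}$. Next I descend into the syntax of $Q$ down to the occurrence $\sigma\langle R\rangle^{\mu}$ singled out above: the clauses for $\diamond$, $P\circ Q$, $!P$ and for an enclosing membrane all preserve validity of the sub-judgement governing the chosen sub-process, so the structural traversal that witnesses $\FORMb{* * *}{Q}$ reaches every labelled occurrence and in particular produces a judgement of the form $\FORMb{\mu_{ggp}\,\mu_{gp}\,\mu_p}{\sigma\langle R\rangle^{\mu}}$, where $\mu_p$ is the membrane immediately containing $\mu$ and $\mu_{gp}$ its parent. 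Applying the membrane clause to this judgement yields $A(\sigma)\subseteq {\cal I}(\mu_{gp},\mu_p,\mu)$, hence $c\in {\cal I}(\mu_{gp},\mu_p,\mu)$. This contradicts the static hypothesis $c\notin {\cal I}(\mu_{gp},\mu_p,\mu)$ instantiated at precisely the context in which $\mu$ sits inside $Q$, which closes the argument.

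The step I expect to be the main obstacle is making rigorous the informal phrase ``$c$ can affect the membrane labelled $\mu$'' and tying it to the purely syntactic condition $c\in A(\sigma)$ for the process $\sigma$ sitting on $\mu$ in $Q$; this is the only place where a semantic notion must be matched against the analysis, and once the bridge is fixed the remainder is bookkeeping, with Theorem~\ref{sbj-red} doing the dynamic work and the membrane clause supplying exactly the inclusion $A(\sigma)\subseteq {\cal I}(\mu_{gp},\mu_p,\mu)$. A secondary point to verify is that descending through $!$ and through $\circ$ neither drops the occurrence of $\mu$ nor shifts its attached context outside the scope of the hypothesis: since the static condition is universally quantified over \emph{all} contexts $\mu_{gp}\mu_p$, whatever context the structural traversal assigns to the reached occurrence of $\mu$ is automatically covered, so no gap can arise.
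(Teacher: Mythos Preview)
Your proposal is correct and, in fact, cleaner than the paper's own argument. Both proofs ultimately rest on the same two ingredients---Subject Reduction (Theorem~\ref{sbj-red}) to transport validity of the estimate along $P\rightarrow^* Q$, and the membrane clause of Table~\ref{analysis} to force $A(\sigma)\subseteq{\cal I}(\mu_{gp},\mu_p,\mu)$---but they organise the argument differently.

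The paper proceeds by an implicit induction on the length of the reduction sequence: it first treats the base case ($c$ does not affect $\mu$ already in $P$), then isolates the \emph{first} step $Q\rightarrow Q'$ at which $c$ would newly appear on $\mu$, and argues by case analysis on the rule responsible (claiming only $(Bud)$ or $(Drip)$ with $c$ occurring in the parameter $\rho$ can cause this). Subject Reduction is then invoked on the pre-state $Q$ to derive the contradiction. Your route bypasses this intermediate case analysis entirely: you go straight to an arbitrary derivative $Q$ in which $c$ affects $\mu$, apply Subject Reduction in one sweep to obtain $\FORMb{***}{Q}$, and then exploit the fact that the analysis clauses are stated as \emph{iff} conditions, so validity propagates compositionally down to the sub-term $\sigma\langle R\rangle^{\mu}$ and yields $c\in A(\sigma)\subseteq{\cal I}(\mu_{gp},\mu_p,\mu)$ directly.

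What you gain is economy: no need to identify the critical step, no need to enumerate which reduction rules can introduce capabilities on a given label. What the paper's decomposition would buy, in principle, is a finer account of \emph{how} a capability could migrate onto $\mu$---but since the conclusion is simply a contradiction, that extra structure is not used. Your identification of the main obstacle (pinning down ``$c$ affects $\mu$'' as $c\in A(\sigma)$ for the process on $\mu$) is exactly right, and your handling of it via the universal quantification over contexts is sound.
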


\begin{definition}[Dynamic: $\mu'$ never inside $\mu$]
Given a process $P$ including a membrane labelled $\mu$ and a membrane labelled $\mu'$,
we say that the membrane $\mu'$ {\it  never ends up inside} the membrane labelled $\mu$
if there not exists a derivative $Q$ such that $P \rightarrow^* Q$, in which $\mu'$ occurs inside the membrane $\mu$.
\end{definition}

\begin{definition}[Static: $\mu'$ never inside $\mu$]
Given a process $P$ including a membrane labelled $\mu$, we say that $\mu'$ {\it  never appears inside} the membrane 
labelled $\mu$ if and only if there exists an estimate $({\cal I}, {\cal R},{\cal C})$ such that:
$\mu' \not\in {\cal I}(\mu_{gp},\mu_p,\mu)$ for each possible context $\mu_{gp}\mu_p$.
\end{definition}

\begin{theorem}
Given a process $P$ including a membrane labelled $\mu$ and a membrane labelled $\mu'$, then
if  $\mu'$  never appears inside the membrane 
labelled $\mu$, then 
the membrane $\mu'$  never ends up inside the membrane labelled $\mu$.
\end{theorem}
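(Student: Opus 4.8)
The plan is to argue by contradiction, using the Subject Reduction property (Theorem~\ref{sbj-red}) to carry the validity of a single estimate all along any computation, and then to read off the needed containment information directly from the syntax-driven clauses of Table~\ref{analysis}. First I would fix a valid estimate $(\mathcal{I},\mathcal{R},\mathcal{C})$ witnessing the static property, so that the top-level judgement $\FORMb{*\,*\,*}{P}$ holds and $\mu' \notin \mathcal{I}(\mu_{gp},\mu_p,\mu)$ for every context $\mu_{gp}\mu_p$. Assume, towards a contradiction, that the dynamic property fails: there is a derivative $Q$ with $P \rightarrow^* Q$ in which the membrane labelled $\mu'$ occurs as an immediate child of the membrane labelled $\mu$ (immediate containment is the right reading, since, as recalled in Section~\ref{CFA}, $\mu_s\in\mathcal{I}(\mu_{gp},\mu_p,\mu)$ records exactly that $\mu$ may \emph{surround} $\mu_s$).

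The first key step is to show that the \emph{same} estimate remains valid for $Q$, i.e.~$\FORMb{*\,*\,*}{Q}$. This follows by a straightforward induction on the length of the reduction sequence $P \rightarrow^* Q$: the base case is the hypothesis $\FORMb{*\,*\,*}{P}$, and each inductive step is exactly one application of Theorem~\ref{sbj-red}, whose own proof already absorbs the $(Struct)$ rule by appealing to Lemma~\ref{congr}.

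The second key step, which I expect to be the main obstacle, is a representation-correctness observation about the estimate: whenever $\FORMb{\mu_{gp}\mu_p\mu}{R}$ holds and a membrane $\sigma'\langle\cdots\rangle^{\mu'}$ occurs as an immediate child of a membrane $\sigma\langle\cdots\rangle^{\mu}$ inside $R$, with $\mu$ sitting in context $\mu_{gp}\mu_p$, then necessarily $\mu' \in \mathcal{I}(\mu_{gp},\mu_p,\mu)$. I would prove this by structural induction on $R$, following the syntax-directed clauses of Table~\ref{analysis}: the clauses for $\diamond$, $\circ$ and $!$ merely propagate the judgement to the immediate sub-systems while preserving the context triple, whereas the membrane clause for $\sigma\langle R'\rangle^{\mu_s}$ both forces $\mu_s \in \mathcal{I}(\mu_{gp},\mu_p,\mu)$ and recurses on $R'$ with the context correctly shifted to $\mu_p\,\mu\,\mu_s$. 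Hence the context parameters carried by the judgement faithfully track the actual nesting of membranes, and reaching the pair $(\mu,\mu')$ in the traversal forces the corresponding membership. Some care is needed here because $Q$ may first have to be rewritten, via Lemma~\ref{congr}, into a representative in which $\mu'$ is syntactically (and not only up to $\equiv$) an immediate child of $\mu$ before the induction applies.

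Combining the two steps yields $\mu' \in \mathcal{I}(\mu_{gp},\mu_p,\mu)$ for the context $\mu_{gp}\mu_p$ in which $\mu$ occurs in $Q$, contradicting the defining assumption $\mu' \notin \mathcal{I}(\mu_{gp},\mu_p,\mu)$ of the static property. Therefore no such $Q$ exists and $\mu'$ never ends up inside $\mu$. The argument is entirely parallel to the companion theorem on capabilities (``$c$ never on $\mu$''), where one invokes the action-membership conjunct $A(\sigma) \subseteq \mathcal{I}(\mu_p,\mu,\mu_s)$ of the membrane clause in place of the membrane-membership conjunct $\mu_s \in \mathcal{I}(\mu_{gp},\mu_p,\mu)$.
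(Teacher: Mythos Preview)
Your argument is correct. Note, though, that the paper's appendix does not actually contain a proof of this theorem: only the companion capability result (``$c$ never on $\mu$'') is proved there, so the natural comparison is with that proof, and the two take noticeably different routes.

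The paper's proof of the companion statement is incremental and operational: it isolates a single step $Q \rightarrow Q'$ at which the forbidden configuration first appears, enumerates which reduction rules could create it (for capabilities, $(Bud)$ and $(Drip)$ via their parameter $\rho$), and then applies Subject Reduction to $Q$ together with the matching closure clause to force the contradictory membership in $\mathcal{I}$. Your proof is instead global and syntactic: you iterate Subject Reduction all the way to the offending derivative $Q$, and then establish once and for all a \emph{representation-correctness} lemma---that any immediate membrane containment present in a term validating the judgement is already recorded in $\mathcal{I}$---by structural induction driven directly by the clauses of Table~\ref{analysis}. What your route buys is uniformity and robustness: no case analysis on the reduction rules is needed, so the argument transfers verbatim to the PEP fragment of Section~\ref{viral} and avoids the incompleteness visible in the paper's sketch (where only the $(Drip)$ case is actually written out). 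Your closing remark that the two theorems are parallel is exactly right, and the representation-correctness lemma is the abstraction that makes the parallel precise.
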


Back to our first running example, we can prove, for instance, that the capability
$bud^{\bot}_o$ never affects the membrane labelled $\mu_P$.
This can be checked by looking in the CFA entries, for the content of ${\cal I}(*,*,\mu_P)$,
that indeed does not include $bud^{\bot}_o$.
Intuitively, this explains the fact that the $bud_o$ synchronisation is not syntactically possible in the context $\mu_P$, whose
sub-membrane $\mu_{P_0}$ is affected by $bud_o$.

In our second running example, we can prove instead, for instance, that the membrane 
$\mu_{P_0Q_0}$ 
never ends up inside the membrane labelled $\mu_{P}$, where
$\mu_{P_0Q_0} = {\bf MI_{mate}}(mate_m, \mu_{P_0},mate^{\bot}_n,\mu_{Q_0},*,*,\mu_{PQ})$. 
Indeed, by inspecting the CFA results, we have that
$\mu_{P_0Q_0} \not\in {\cal I}(*,*,\mu_P)$.
Intuitively, this corresponds to the fact that the $mate_m$ synchronisation is not syntactically possible in the context $\mu_P$, 
because $\mu_{P_0}$ and $\mu_{Q_0}$ are not siblings,
while it is in the context $\mu_{PQ}$.

Similarly, we can mix ingredients and introduce new properties, e.g.~one can ask whether
two membranes labelled $\mu'$ and $\mu''$, {\it  never end up (occur) together} in the same membrane $\mu$.
On the static side, this amounts to checking whether there exists an estimate $({\cal I}, {\cal R},{\cal C})$ such that:
for all possible context $\mu_{gp}\mu_p$, $\mu' \in {\cal I}(\mu_{gp},\mu_p,\mu) \wedge \mu'' \not\in {\cal I}(\mu_{gp},\mu_p,\mu)$ 
or $\mu' \not\in {\cal I}(\mu_{gp},\mu_p,\mu) \wedge \mu'' \in {\cal I}(\mu_{gp},\mu_p,\mu)$. 
Note that a single analysis can suffice for verifying all the above properties: 
only the values of interest tracked for testing change.


\section{Discussion on Causal Information}
\label{causal}

Understanding the causal relationships between the actions performed by a process is a relevant issue for all process algebras used in Systems Biology. 
Although our CFA approximates the possible reachable configurations, we are able to extract some information on the causal relations among these configurations.
To investigate these possibilities of our CFA, we follow \cite{busi}
where different kinds of causal dependencies are described and classified, 
by applying our analysis to the same key examples.

The first kinds are called {\it  structural causality} and {\it  synchronisation causality} and are typical of all process algebras.
Structural causality arises from the prefix structure of terms, as in 

$
\begin{array}{l}
P = drip(\sigma).drip(\rho) \langle \rangle^{\mu_{P}} \inter{drip}
\sigma \langle \rangle^{\mu_R} \circ drip(\rho) \langle \rangle^{\mu_{P}}
\inter{drip} 
\sigma \langle \rangle^{\mu_R} \circ \rho \langle \rangle^{\mu'_R}
\circ \langle \rangle^{\mu_{P}}
\end{array}
$

\noindent
where the action on $drip(\rho)$ depends on the one on $drip(\sigma)$, since the second action is not reachable until the first has fired.
Synchronisation causality arises when an action depends on a previous synchronisation as in:

$
\begin{array}{l}
P = drip(\sigma_1).mate_n.drip(\tau_1) \langle \rangle^{\mu_{P'_0}} \circ
drip(\sigma_2).mate_n.drip(\tau_2) \langle \rangle^{\mu_{P'_1}} \inter{drip}^2 \\
P' = \sigma_1 \langle \rangle^{\mu_{R1}} \circ \sigma_2 \langle \rangle^{\mu_{R2}} \circ
mate_n.drip(\tau_1) \langle \rangle^{\mu_{P'_0}} \circ
mate_n.drip(\tau_2) \langle \rangle^{\mu_{P'_1}} \inter{mate_n} \\
P'' = \sigma_1 \langle \rangle^{\mu_{R1}} \circ \sigma_2 \langle \rangle^{\mu_{R2}} \circ 
\langle \rangle^{\mu_{P'_0P'_1}} 
\end{array}
$

\noindent
where, the mate action is possible only when both
$drip(\sigma_1)$ and $drip(\sigma_2)$ have been performed, and the following $drip(\tau_1)$ and $drip(\tau_2)$ depend on the previous
mate synchronisation.
Our CFA is not able to capture these kinds of dependencies, because of the $A()$ function definition, according to which $A(\sigma.\tau) = A(\sigma) \cup A(\tau)$.  
In other words, the CFA disperses the order between prefixes.

According to \cite{busi}, when an action is performed on a membrane it impacts only on its continuation and not on the whole process on the membrane, e.g., in:

$
\begin{array}{l}
P = (mate_n | drip(\sigma)) \langle \rangle^{\mu_{P}} \circ mate_n^{\bot} \langle \rangle^{\mu_Q}
 \inter{mate_n}
drip(\sigma) \langle \rangle^{\mu_{PQ}} \inter{drip}
\sigma \langle \rangle^{\mu_R} \circ \langle \rangle^{\mu_{PQ}} 
\end{array}
$

$
\begin{array}{l}
P = (mate_n | drip(\sigma)) \langle \rangle^{\mu_{P}} \circ mate_n^{\bot} \langle \rangle^{\mu_Q}
 \inter{drip}
\sigma \langle \rangle^{\mu'_R} \circ (mate_n) \langle \rangle^{\mu_{P}} \circ mate_n^{\bot} \langle \rangle^{\mu_Q}
\\ \inter{mate_n}
\sigma \langle \rangle^{\mu'_R} \circ \langle \rangle^{\mu_{PQ}} 
\end{array}
$

\noindent
the drip operation can be considered causally independent form the mate operation, 
because it can be executed regardless of the fact that the mate interaction has been performed.

Our analysis reflects this, because we have
${\cal C}(\mu_R) \ni 
(drip(\sigma),\mu_P,*,*,*)$ and also that
${\cal C}(\mu'_R) \ni 
(drip(\sigma),\mu_{PQ},*,*,*)$.

When considering MBD actions and, in particular, the
mate action, 
we have to do with another kind of causality called {\it  environmental} in \cite{busi}, due to the fact that the interaction possibilities of
the child membranes are increased by the mate synchronisation.

Examples of this kind of causality can be observed in our running examples.
In the first, for instance, 
the $bud_o$  depends on the $mate_n$, as reflected by the CFA entries:
${\cal C}(\mu_{R2}) \ni (bud_o, \mu_P,bud^{\bot}_o,\mu_Q,*,*,\mu_{PQ})$, where
${\cal C}(\mu_{PQ}) \ni (mate_n, \mu_P,mate^{\bot}_n,\mu_Q,*,*,*)$.

In the second, we can observe that
the synchronisation on $mate_m$ cannot be performed before a synchronisation on $mate_n$, as captured by the following CFA entries:
${\cal C}(\mu_{P_0Q_0}) \ni (mate_m, \mu_{P_0},mate^{\bot}_m,\mu_{Q_0},*,*,\mu_{PQ})$, and
${\cal C}(\mu_{P_0P_1Q_0}) \ni$ $(mate_m, \mu_{P_0P_1},mate^{\bot}_m,\mu_{Q_0},*,*,\mu_{PQ})$, where $(mate_n, \mu_P,mate^{\bot}_n,\mu_Q,*,*,*)$ belongs to
${\cal C}(\mu_{PQ})$.

Finally, in \cite{busi}, a casual dependency generated by Bud (and Drip) is discussed on the following example:

$
\begin{array}{l}
P = bud_n^{\bot} (drip(\sigma)) \langle bud_n \langle \rangle^{\mu_{P0}} \rangle^{\mu_P}
\rightarrow_{bud} 
drip(\sigma) \langle  \langle \rangle^{\mu_{P0}} \rangle^{\mu_R} \circ \langle \rangle^{\mu_{P}}
\rightarrow_{drip} \\
 \sigma \langle \rangle^{\mu_{R_{\sigma}}} \circ \langle  \langle \rangle^{\mu_{P0}} \rangle^{\mu_R} \circ \langle \rangle^{\mu_{P}}
\end{array}
$

\noindent
The bud action generates a new membrane and the corresponding actions are caused by the new membrane, as captured by the CFA entries:
$drip(\sigma) \in {\cal I}(*,*,\mu_{R})$ and 
${\cal C}(\mu_{R_{\sigma}}) \ni (drip(\sigma), \mu_R,*,*,*)$.

These considerations encourage us to further investigate and to formalise the static contribution of the CFA in establishing causal relationships.

\section{The Analysis at Work: Viral Infection}
\label{viral}

We illustrate our approach by applying it to the abstract description of the 
infection cycle of the Semliki Forest Virus, shown in Figure~\ref{virus},
as specified in  \cite{C05}.
The Semliki Forest Virus is one of 
the so-called ``enveloped viruses''. 
We focus just on the first stage of the cycle and we report the analysis as given in~\cite{BBC09}.
The virus, specified in Table~\ref{virus-evol}, consists of a capsid containing the viral RNA (the nucleocapsid).
The nucleocapsid is surrounded by a membrane, similar to the cellular one, but enriched with a special protein.
The virus is brought into the cell by phagocytosis, thus wrapped by an additional membrane layer.
An endosome compartment is merged with the wrapped-up virus.
At this point, the virus uses its special membrane protein to trigger the exocytosis process that leads the naked nucleocapsid into the cytosol, ready to damage it.
By summarising, if the $cell$ gets close to a $virus$, then it evolves into an infected cell.
%
\begin{figure*}[htb]\centering
\includegraphics[width=0.8\textwidth]{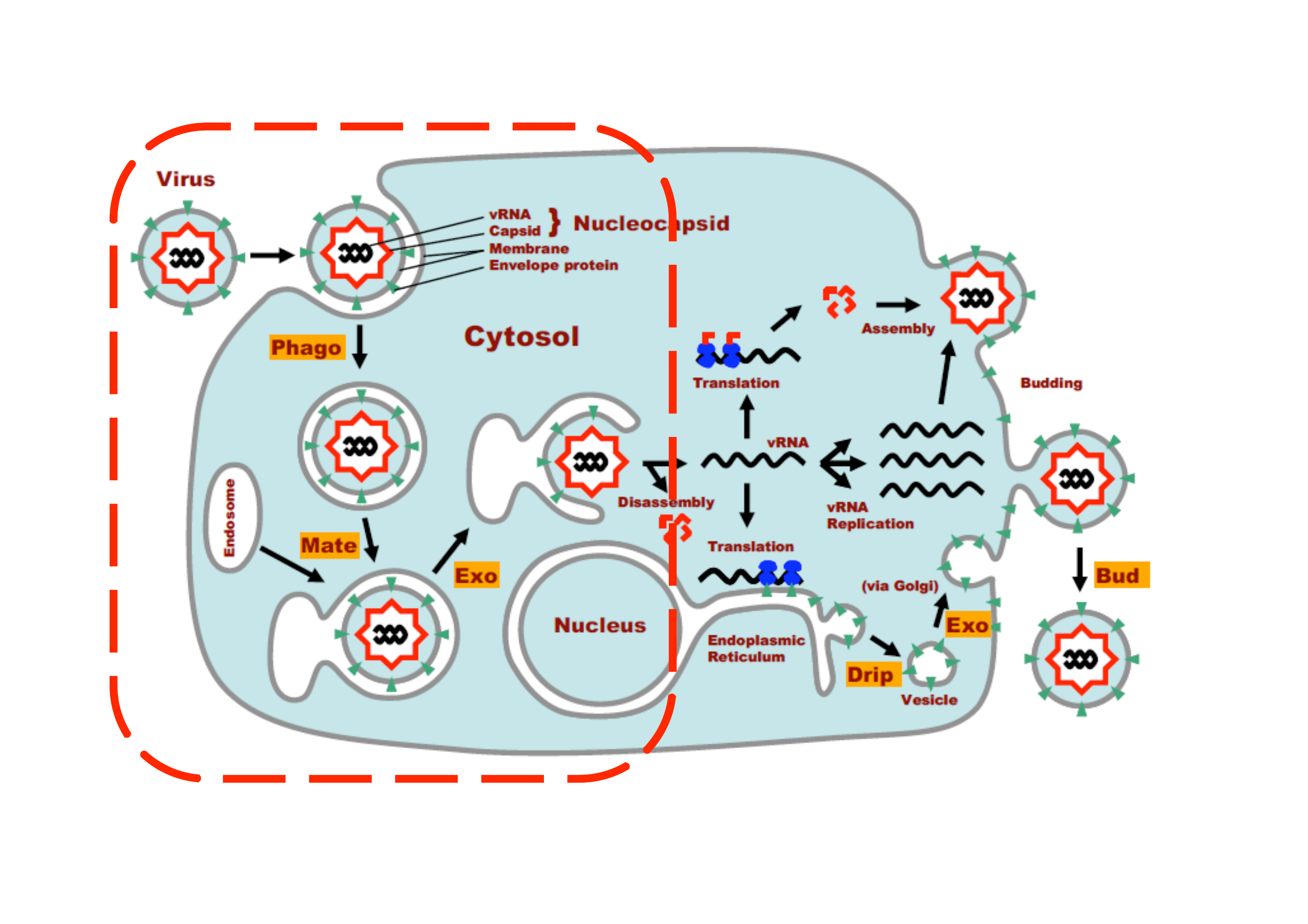}
\caption{\it  Viral Infection (highlighted part) and Reproduction. [Adapted from \cite{C05} and \cite{alberts}]}
\label{virus}
\end{figure*}
%
%
\noindent
The complete evolution of the viral infection 
is reported in 
Table~\ref{virus-evol}, while 
the main analysis entries are 
in 
Table~\ref{virus-cfa}. 
The specification includes the PEP version of Brane calculus, whose syntax and reduction semantics is reported in Table~\ref{opsem2}.
This further set of PEP actions ($\Xi_{PEP}$) are inspired by endocytosis and exocytosis processes.
The first indicates the process of incorporating external material into a cell, by engulfing it with the cell membrane, while the second one indicates the reverse process.
Endocytosis is rendered by two more basic operations:  {\it  phagocytosis} (denoted by $phago$), that consists in engulfing just one external membrane, and {\it  pinocytosis}  (denoted by $pino$), consists in engulfing zero external membranes;  {\it  exocytosis} is instead denoted by $exo$.
The CFA for the calculus can be straightforwardly extended to deal with the Phago/Exo/Pino (PEP)
actions, as shown in Table~\ref{analysis2}.

\begin{table}[th]
{\small
$$\begin{array}{|c|}
\hline
\begin{array}{lll}
a ::= &
phago_n \ | \ phago_n^{\bot}(\rho)  \ | \ 
exo_n \ | \ exo_n^{\bot} \ | \ 
pino(\rho)
& \mbox{\ \  $\Xi_{{PEP}}$}
\end{array}
\\
\hline
\begin{array}{ll}
(Phago) &
\textcolor{blue}{phago_n}.\sigma | \sigma_0 \langle P \rangle^{\mu_P} \circ
\textcolor{blue}{phago_n^{\bot}(\rho)}.\tau | \tau_0 \langle Q \rangle^{\mu_Q} 
\rightarrow
\tau | \tau_0 \langle \rho 
\langle
\sigma | \sigma_0 . \langle P  \rangle^{\mu_{P}} \rangle^{\mu_R} \circ Q \rangle^{\mu_{Q}} \\
& \mbox{ where }
\mu_{R} = {\bf MI_{phago}}(phago_n,\mu_P,phago_n^{\bot}(\rho),\mu_Q,\mu_{gp},\mu_p,\mu) 
\\
&
\mbox{ and $\mu$ identifies the closest membrane surrounding $\mu_P$ in the context $\mu_{gp}\mu_p$}
\\
(Exo) &
\textcolor{blue}{exo_n^{\bot}}.\tau | \tau_0 \langle \textcolor{blue}{exo_n}.\sigma | \sigma_0 
\langle P  \rangle^{\mu_{P}} \circ Q  \rangle^{\mu_{Q}} 
\rightarrow
P \circ 
\sigma | \sigma_0 | \tau | \tau_0 \langle Q \rangle^{\mu_Q}
\\
(Pino) &
\textcolor{blue}{pino(\rho)}.\sigma | \sigma_0 \langle P  \rangle^{\mu_{P}} 
\rightarrow
\sigma | \sigma_0 \langle \rho \langle \rangle^{\mu_R} \circ P \rangle^{\mu_P}
\\
& \mbox{ where }
\mu_{R} = {\bf MI_{pino}}(pino(\rho),\mu_P,\mu_{gp},\mu_p,\mu)
\\
&
\mbox{ and $\mu$ identifies the closest membrane surrounding $\mu_P$ in the context $\mu_{gp}\mu_p$}
\end{array}
\\
[4ex]
\hline
\hline
\end{array}$$
\caption{Syntax and Reduction Rules for PEP Actions.}
\label{opsem2}
}
\end{table}
\begin{table}
\begin{center}
{
\begin{tabular}{|c|} \hline
\mbox{$\begin{array}{ll}
(Phago) & \ phago_n \in {\cal I}(\mu_{p}, \mu, \mu_{P})
\wedge
phago_n^{\bot}(\rho) \in {\cal I}(\mu_{p}, \mu, {\mu_{Q}}) \wedge 
\mu_{P},\mu_{Q} \in {\cal I}(\mu_{gp},\mu_{p},\mu) \ \wedge \\

& \ ((\mu_{p}, \mu, {\mu_{P}}),(\mu_{p}, \mu, {\mu_{Q}})) \not\in {\cal R}
\\
&
\Rightarrow 
A(\rho) \subseteq {\cal I}(\mu, \mu_Q,\mu_R) \wedge \mu_{R} \in {\cal I}(\mu_p,\mu,\mu_Q) \wedge \mu_P \in {\cal I}(\mu, \mu_Q,\mu_R) 
\\
& \mbox{ where }
\mu_{R} = {\bf MI_{phago}}(phago_n,\mu_P,phago_n^{\bot}(\rho), \mu_Q,\mu_{gp},\mu_{p},\mu)
\\
(Exo) & \ exo_n \in {\cal I}(\mu, \mu_Q, \mu_{P})
\wedge
exo_n^{\bot} \in {\cal I}(\mu_{p}, \mu, {\mu_{Q}}) \wedge \mu_{P} \in {\cal I}(\mu_{p}, \mu, \mu_{Q}) \wedge \mu_{Q} \in {\cal I}(\mu_{gp},\mu_{p},\mu)
\\
&
\Rightarrow
A(\sigma), A(\sigma_0) \subseteq {\cal I}(\mu_{p}, \mu,\mu_Q)
\wedge {\cal I}(\mu, \mu_Q, \mu_{P}) \subseteq {\cal I}(\mu_{gp},\mu_{p},\mu)
\\
(Pino) & \ pino(\rho) \in {\cal I}(\mu_{p}, \mu,\mu_{P})
\wedge
\mu_{P} \in {\cal I}(\mu_{gp},\mu_{p},\mu)
\\
&
\Rightarrow 
A(\rho) \subseteq {\cal I}(\mu, \mu_P, \mu_R) \wedge
\mu_{R} \in {\cal I}(\mu_p,\mu,\mu_P) 
\\
& \mbox{ where }
\mu_{R} = {\bf MI_{pino}}(pino(\rho),\mu_P,\mu_{gp},\mu_{p},\mu)
\\[0.1ex]
\end{array} $} \\ \hline
\end{tabular}
}
\end{center}
\caption{Closure Rules for PEP Actions}
\label{analysis2}
\end{table}%
Roughly, the analysis results allow us to predict the effects of the infection.
Indeed, the inclusion $\mu_{nucap} \in {\cal I}(*,*,\mu_{memb})$ reflects the fact that, at the end of the shown computation,
$nucap$ is inside $membrane$ together with $cytosol'$ that is equivalent to $cytosol$, apart from the label $\mu_{ph\mbox{-}endo}$ that decorates the enclosed membrane $endosome$. 
%
Furthermore, we can check our properties in this systems.
As far as the spatial structure properties, we can prove here, e.g., that
(i) the capability
$exo^{\bot}$ never affects the membrane labelled $\mu_{ph}$ 
(as $exo^{\bot} \not\in {\cal I}(*,\mu_{memb},\mu_{ph})$); and that
(ii) the membrane 
$\mu_{virus}$ 
never ends up inside the membrane labelled $\mu_{endo}$ 
(as $\mu_{virus} \not\in {\cal I}(*,\mu_{memb},\mu_{endo})$).
Furthermore, we can observe that the CFA captures the dependency of the synchronisation on $mate$
on the synchronisation on $phago$, since we have that
$(mate,\mu_{ph},mate^{\bot},\mu_{endo},*,*,\mu_{memb}) \in{\cal C}(\mu_{ph\mbox{-}endo})$, and
$\mu_{ph}$ is such that we have that $(phago,\mu_{virus}, phago^{\bot},\mu_{memb},*,*,*)\in{\cal C}(\mu_{ph})$.

%
%

\begin{table}
\begin{center}
{
\small
\begin{tabular}{|c|} \hline
\mbox{$
\begin{array}{l}
\begin{array}{lll}
virus & \stackrel{def}{=}  & phago.exo \langle nucap \rangle^{\mu_{virus}} \\
nucap & \stackrel{def}{=}  &  !bud | X \langle vRNA \rangle^{\mu_{nucap}}  \\
cell & \stackrel{def}{=}  & membrane \langle  cytosol  \rangle^{\mu_{memb}}  \\
\end{array} 
\ \ \ \ \
\begin{array}{lll}
membrane & \stackrel{def}{=} & ! phago^{\bot}(mate) | !exo^{\bot} \\
cytosol & \stackrel{def}{=}  & endosome \circ Z \\
endosome & \stackrel{def}{=}  & !mate^{\bot} | ! exo^{\bot} \langle \rangle^{\mu_{endo}} 
\\
\end{array} \\
\begin{array}{c}
\hline
virus \circ cell  \\
\equiv (\textcolor{blue}{phago}.exo) \langle nucap \rangle^{\mu_{virus}}  \circ 
(! \textcolor{blue}{phago^{\bot}(mate)} | !exo^{\bot}) \langle  cytosol  \rangle^{\mu_{memb}}  \inter{phago} \\
(! phago^{\bot}(mate) | !exo^{\bot}) \langle \textcolor{blue}{mate} \langle exo \langle nucap \rangle^{\mu_{virus}} \rangle^{\mu_{ph}} \circ 
(! \textcolor{blue}{mate^{\bot}} | ! exo^{\bot}) \langle \rangle^{\mu_{endo}} \circ Z \rangle^{\mu_{memb}}
 \\
\inter{mate} (! phago^{\bot}(mate) | !exo^{\bot}) \langle (! mate^{\bot} | ! \textcolor{blue}{exo^{\bot}}) \langle \textcolor{blue}{exo} 
\langle nucap \rangle^{\mu_{virus}} \rangle^{\mu_{ph\mbox{-}endo}} \circ Z
\rangle^{\mu_{memb}}  \inter{exo} \\
(! phago^{\bot}(mate) | !exo^{\bot}) \langle (! mate^{\bot} | ! exo^{\bot})\langle  \rangle^{\mu_{ph\mbox{-}endo}} 
\circ nucap \circ Z \rangle^{\mu_{memb}} \equiv \\
membrane \langle  nucap \circ cytosol'  \rangle^{\mu_{memb}}\\
\end{array} 
\\[2ex]
\hline

\end{array} $} \\ \hline
\end{tabular}
}
\end{center}
\caption{Viral Infection System and its Evolution}
\label{virus-evol}
\end{table}
\begin{table}
\begin{center}
{
\begin{tabular}{|c|}
 \hline
\mbox{$
\begin{array}{l}
\begin{array}{l}
\mu_{nucap} \in {\cal I}(*,*,\mu_{virus}),\mu_{endo} \in {\cal I}(*,*,\mu_{memb}),
\mu_{virus}, \mu_{memb} \in {\cal I}(*,*,*)   \\
phago, exo \in {\cal I}(*,*,\mu_{virus}),
phago^{\bot}(mate),exo^{\bot} \in {\cal I}(*,*,\mu_{memb})
\\[1ex]
mate \in {\cal I}(*,\mu_{memb},\mu_{ph}),  \mu_{ph} \in {\cal I}(*,*,\mu_{memb}), 
\mu_{virus} \in {\cal I}(*,\mu_{memb},\mu_{ph}),\\
mate^{\bot}, exo^{\bot} \in {\cal I}(*,\mu_{memb},\mu_{endo})\\
\mu_{nucap} \in {\cal I}(\mu_{memb},\mu_{ph},\mu_{virus}),\\
 \mu_{ph\mbox{-}endo} \in {\cal I}(*,*,\mu_{memb}), 
\mu_{virus} \in {\cal I}(*,\mu_{memb},\mu_{ph\mbox{-}endo}),\\
mate^{\bot}, exo^{\bot} \in {\cal I}(*,\mu_{memb},\mu_{ph\mbox{-}endo})\\
\mu_{nucap} \in {\cal I}(\mu_{memb},\mu_{ph\mbox{-}endo},\mu_{virus}),\\
\mu_{nucap} \in {\cal I}(*,*,\mu_{memb}),\\
\mu_{ph-endo} \in {\cal I}(*,*,\mu_{memb})\\
 (phago,\mu_{virus}, phago^{\bot},\mu_{memb},*,*,*)\in{\cal C}(\mu_{ph}) \\
 (mate,\mu_{ph},mate^{\bot},\mu_{endo},*,*,\mu_{memb}) \in{\cal C}(\mu_{ph\mbox{-}endo}) 
\\
\end{array}
\\[2ex]
\end{array} $} \\ \hline
\end{tabular}
}
\end{center}
\caption{Viral Infection Analysis Results}
\label{virus-cfa}
\end{table}

\section{Conclusions}
\label{concl}

We have presented a refinement of the CFA
for the Brane calculi \cite{BBC09}, based on contextual and causal information. 
The CFA provides us with a verification framework for properties of biological systems modelled in Brane,
such as properties on the spatial structure of processes, in terms of membrane hierarchy.
We plan to formalise new properties like the ones introduced here.

We have found that the CFA is able to capture some kinds of causal dependencies \cite{busi} arising in the MBD version of 
Brane Calculi.
As future work, we would like to investigate thoroughly and formally the static contribution of the CFA in establishing causal relationships between the Brane interactions.

\smallskip

\noindent
{\bf Acknowledgments.}
We wish to thank Francesca Levi for our discussion on a draft of our paper and our anonymous referees for their useful comments.


\bibliographystyle{mecbic} 

\appendix

\section{Proofs}\label{app-proof}

This appendix restates the lemmata and theorems presented earlier
in the paper and gives the proofs of their correctness.
To establish the semantic correctness, the following auxiliary results are needed. 

\begin{proposition}\label{Afatto}
If 
$\FORMb{\mu_p \mu \mu_1}{P}$
and
${\cal I}(\mu_p, \mu, \mu_1) \subseteq {\cal I}(\mu_p, \mu, \mu_2)$,
then 
$\FORMb{\mu_p \mu \mu_2}{P}$.
\end{proposition}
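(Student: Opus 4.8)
The plan is to argue by structural induction on $P$, reading the judgement $\FORMb{\mu_p \mu \mu_1}{P}$ through the syntax-directed clauses in the upper part of Table~\ref{analysis}. The base case $P=\diamond$ is vacuous, since both judgements reduce to $\mathit{true}$. For $P=P'\circ P''$ and for $P=\,!P'$ the clauses are defined componentwise, so the claim follows at once by applying the induction hypothesis to the immediate subterms, keeping the very same inclusion ${\cal I}(\mu_p,\mu,\mu_1)\subseteq{\cal I}(\mu_p,\mu,\mu_2)$. All the content of the statement is therefore concentrated in the membrane case, which I would treat next.

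Suppose $P=\sigma\langle P'\rangle^{\mu_s}$. Unfolding the membrane clause at the context $\mu_p\mu\mu_1$ yields three conjuncts: \emph{(a)} $\mu_s\in{\cal I}(\mu_p,\mu,\mu_1)$; \emph{(b)} $A(\sigma)\subseteq{\cal I}(\mu,\mu_1,\mu_s)$; and \emph{(c)} $\FORMb{\mu\mu_1\mu_s}{P'}$. To establish $\FORMb{\mu_p\mu\mu_2}{P}$ I must reproduce the same three conjuncts with $\mu_1$ replaced by $\mu_2$. Conjunct \emph{(a)} is immediate: from $\mu_s\in{\cal I}(\mu_p,\mu,\mu_1)$ and the hypothesis ${\cal I}(\mu_p,\mu,\mu_1)\subseteq{\cal I}(\mu_p,\mu,\mu_2)$ we obtain $\mu_s\in{\cal I}(\mu_p,\mu,\mu_2)$, which is precisely the first requirement of the target judgement. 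So the top layer of membrane names carried by $P$ is relocated correctly by the stated inclusion.

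The hard part will be conjuncts \emph{(b)} and \emph{(c)}, and this is the main obstacle I foresee. As soon as we descend under $\mu_s$, the label $\mu_1$ no longer occupies the \emph{current} position but migrates to the \emph{parent} position: it now indexes the slots ${\cal I}(\mu,\mu_1,\mu_s)$ and the context $\mu\mu_1\mu_s$, where the stated hypothesis --- which only compares the current-position slots ${\cal I}(\mu_p,\mu,\mu_1)$ and ${\cal I}(\mu_p,\mu,\mu_2)$ --- gives no information; a further descent inside $P'$ pushes $\mu_1$ into the \emph{grandparent} position, i.e.\ into slots of the form ${\cal I}(\mu_1,\nu,\nu')$. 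Hence a naive induction on the bare statement stalls, because the inclusion does not propagate by itself through the nesting. My plan to close it is to strengthen the induction hypothesis so that the relabelling $\mu_1\rightsquigarrow\mu_2$ is carried simultaneously at the three context positions, threading alongside the given inclusion the two companion inclusions ${\cal I}(\mu,\mu_1,\nu)\subseteq{\cal I}(\mu,\mu_2,\nu)$ and ${\cal I}(\mu_1,\nu,\nu')\subseteq{\cal I}(\mu_2,\nu,\nu')$ at the parent and grandparent levels (these are the cascading two- and one-argument inclusions ${\cal I}(\mu,\mu_1)\subseteq{\cal I}(\mu,\mu_2)$ and ${\cal I}(\mu_1)\subseteq{\cal I}(\mu_2)$ that the closure rule relabelling $\mu_1$ into $\mu_2$ supplies in every intended use of the proposition). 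The saving observation that makes this finite is that, the context window having width three, $\mu_1$ can occur in the derivation of $\FORMb{\mu_p\mu\mu_1}{P}$ only at depths $0,1,2$ and is absent from depth three onwards, so below that level the two derivations coincide verbatim; once the three relabellings are in hand, \emph{(b)} follows by rewriting its slot and \emph{(c)} by the strengthened induction hypothesis at the shifted context.
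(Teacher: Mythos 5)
Your proposal is correct, and it follows the same skeleton as the paper's proof---structural induction on $P$, with all the content concentrated in the membrane case---but it is more careful than the paper, and the obstacle you isolate is real. The paper's proof shows only the case $P=\sigma\langle P'\rangle^{\mu_s}$ and simply asserts that $\mu_s\in{\cal I}(\mu_p,\mu,\mu_1)$, the hypothesis inclusion, and $A(\sigma)\subseteq{\cal I}(\mu,\mu_1,\mu_s)$ together yield $A(\sigma)\subseteq{\cal I}(\mu,\mu_2,\mu_s)$, then closes by ``induction hypothesis''---precisely the two steps you flag as unlicensed by the bare single-slot inclusion, since $\mu_1$ has migrated to the parent position in conjunct \emph{(b)} and into the parent/grandparent positions inside the derivation for \emph{(c)}, where the literal hypothesis is silent and the stated induction hypothesis relabels only the current slot. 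The paper's argument is sound only under its notational convention (spelled out after the $(Mate)$ clause in Section 3) that an inclusion such as ${\cal I}(\mu,\mu_P)\subseteq{\cal I}(\mu,\mu_{PQ})$ abbreviates the whole family of slotwise inclusions at the deeper context positions; and at every call site of the proposition the closure conditions do supply exactly the cascading inclusions you propose to thread through the induction: all three levels in $(Mate)$, and the two relevant ones---cited as ``(cond 2)''---in the $(Bud)$ case of the subject-reduction proof, where the proposition is even invoked to relabel the \emph{parent} position ($\FORMb{\mu \mu_1 \mu_0}{P_0}$ to $\FORMb{\mu \mu_R \mu_0}{P_0}$) rather than the current one, something the literal statement does not cover at all. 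Your strengthened simultaneous induction, carrying the relabelling of $\mu_1$ into $\mu_2$ at all three context positions, with the width-three window guaranteeing that $\mu_1$ disappears from the context from depth three onwards so the two derivations coincide verbatim below, is the precise formulation that the paper leaves implicit, and it is the version that actually matches the proposition's uses; what your approach buys is a statement that is true as written, whereas the paper's one-slot statement, read literally, can fail on conjuncts \emph{(b)} and \emph{(c)} exactly as you predict (the inclusion ${\cal I}(\mu_p,\mu,\mu_1)\subseteq{\cal I}(\mu_p,\mu,\mu_2)$ may hold accidentally while ${\cal I}(\mu,\mu_1,\mu_s)\not\subseteq{\cal I}(\mu,\mu_2,\mu_s)$).
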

\begin{proof}
By structural induction on $P$. 
We show just one case.
\\
\textbf{Case} $P = \sigma \langle P' \rangle^{\mu_s}$. 
We have that \FORMb{\mu_p \mu \mu_1}{P} is equivalent to
$\mu_s \in {\cal I}(\mu_p, \mu, \mu_1) \ \wedge \ 
A(\sigma) \subseteq {\cal I}(\mu, \mu_1, \mu_s)
\ \wedge \
\FORMb{\mu \mu_1 \mu_s}{P'}$.
Now, $\mu_s \in {\cal I}(\mu_p, \mu, \mu_1)$ and ${\cal I}(\mu_p, \mu, \mu_1) \subseteq {\cal I}(\mu_p, \mu, \mu_2)$ 
and $A(\sigma) \subseteq {\cal I}(\mu, \mu_1, \mu_s)$
imply
$\mu_s \in {\cal I}(\mu_p, \mu, \mu_2)$ and
$A(\sigma) \subseteq {\cal I}(\mu, \mu_2, \mu_s)$.
Therefore,
by induction hypothesis, we have that 
$\FORMb{\mu_p \mu \mu_2}{P}$. 
\end{proof}

\begin{proposition}\label{AfunctA}
If $\sigma \equiv \tau$ then $A(\sigma) = A(\tau)$.
\end{proposition}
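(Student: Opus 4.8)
The plan is to proceed by rule induction on the derivation of $\sigma \equiv \tau$, exploiting the fact that $\equiv$ is the \emph{least} congruence satisfying the clauses of Table~\ref{structcong}. Concretely, it suffices to show that (i) each generating axiom relating membrane processes preserves the value of $A$, (ii) the equivalence-relation closure (reflexivity, symmetry, transitivity) preserves equality of $A$-values, and (iii) each congruence (context) rule for $|$, $!$ and prefixing preserves it. The observation that makes every case go through is that $A$ behaves like a homomorphism from the membrane-process algebra into $(\mathcal{P}(\Xi_{MBD}), \cup, \emptyset)$: it sends $|$ to $\cup$, sends $0$ to $\emptyset$, and simply discards replication, i.e.\ $A(!\sigma) = A(\sigma)$.

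For the base cases I would check each axiom in turn. The commutative-monoid laws for $(|,0)$ reduce to the commutativity, associativity and $\emptyset$-neutrality of $\cup$, e.g.\ $A(\sigma|\tau) = A(\sigma)\cup A(\tau) = A(\tau)\cup A(\sigma) = A(\tau|\sigma)$ and $A(\sigma|0) = A(\sigma)\cup\emptyset = A(\sigma)$. The replication axioms use that $A$ ignores $!$ together with the idempotency of $\cup$: for instance $A(!(\sigma|\tau)) = A(\sigma)\cup A(\tau) = A(!\sigma)\cup A(!\tau) = A(!\sigma | !\tau)$, while $A(!\sigma) = A(\sigma) = A(\sigma)\cup A(\sigma) = A(\sigma)\cup A(!\sigma) = A(\sigma | !\sigma)$ handles $!\sigma \equiv \sigma | !\sigma$; the axioms $!0 \equiv 0$ and $!!\sigma \equiv !\sigma$ are immediate from $A(!\rho)=A(\rho)$.

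For the inductive cases I would assume the induction hypothesis $A(\sigma)=A(\tau)$ for a subderivation $\sigma\equiv\tau$ and propagate it through each context: $A(\sigma|\rho)=A(\sigma)\cup A(\rho)=A(\tau)\cup A(\rho)=A(\tau|\rho)$, then $A(!\sigma)=A(\sigma)=A(\tau)=A(!\tau)$, and finally $A(a.\sigma)=\{a\}\cup A(\sigma)=\{a\}\cup A(\tau)=A(a.\tau)$. Reflexivity, symmetry and transitivity of $\equiv$ are inherited directly from the corresponding properties of set equality, closing the induction.

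I do not expect a genuine obstacle: each clause collapses to an elementary identity of union. The only point deserving a little care is the unfolding axiom $!\sigma \equiv \sigma | !\sigma$ (together with the replication/distribution law), whose verification leans essentially on the idempotency $S\cup S = S$ of union. This is precisely the feature that makes $A$ blind to the number of parallel copies of a subprocess, and hence compatible with the semantics of replication; it is also the source of the imprecision already flagged in the text, and here it is exactly what guarantees invariance of $A$ under $\equiv$.
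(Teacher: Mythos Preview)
Your proposal is correct and follows essentially the same approach as the paper: a clause-by-clause inspection of the structural congruence rules for membrane processes, reducing each to an elementary identity of set union. Your write-up is in fact more explicit than the paper's, which only sketches two representative cases (commutativity of $|$ and the congruence rule for $|$) and declares the rest similar; your added remarks on the equivalence closure and on $A$ as a homomorphism into $(\mathcal{P}(\Xi_{MBD}),\cup,\emptyset)$ make the structure of the argument clearer without changing its substance.
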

\begin{proof}
The proof amounts to a straightforward inspection
of each of the
clauses defining the structural congruence clauses relative to membranes.
We only show two cases, the others are similar.
 \\
 \textbf{Case} $\sigma_0 | \sigma_1 \equiv \sigma_1 | \sigma_0$. 
 We have that
 $A(\sigma_0 | \sigma_1) = A(\sigma_0) \cup A(\sigma_1) = 
 A(\sigma_1 | \sigma_0)$.
 \\
 \textbf{Case} $\sigma \equiv \tau \Rightarrow \sigma | \rho \equiv \tau | \rho$.  
 We have that 
$A(\sigma | \rho) = A(\sigma) \cup A(\rho)$.
Now, since $\sigma \equiv \tau$, we have that $A(\sigma)=A(\tau)$ and therefore
$A(\sigma | \rho) = A(\tau) \cup A(\rho)$, from which the required $A(\tau | \rho)$.
\end{proof}

\noindent
{\bf Lemma 4.1} \textbf{(Invariance of Structural Congruence)}\label{Acongr}
{\it  If $P \equiv Q$ and we have that  $\FORMb{\mu_{gp} \mu_p \mu}{P}$  then also 
$\FORMb{\mu_{gp} \mu_p \mu}{Q}$}.
\begin{proof}
The proof amounts to a straightforward inspection
of each of the
clauses defining the structural congruence clauses. 
We only show two cases, the others are similar.
 \\
  \textbf{Case} $P_0 \circ P_1 \equiv P_1 \circ P_0$.  
 We have that 
 $ \FORMb{\mu_{gp} \mu_p \mu}{P_0 \circ P_1}$ is equivalent to 
 $\FORMb{\mu_{gp} \mu_p \mu}{P_0}  \wedge \FORMb{\mu_{gp} \mu_p \mu}{P_1}$, that is equivalent to
 $\FORMb{\mu_{gp} \mu_p \mu}{P_1} \wedge \FORMb{\mu_{gp} \mu_p \mu}{P_0}$ and therefore to
 $\FORMb{\mu_{gp} \mu_p \mu}{P_1 \circ P_0}$.
\\
 \textbf{Case} $P \equiv Q \ \wedge \ \sigma \equiv \tau  
 \Rightarrow \sigma \langle P \rangle^{\mu_s} \equiv \tau \langle Q \rangle^{\mu_s}$.
 We have  that 
 $\FORMb{\mu_{gp} \mu_p \mu}{\sigma \langle P \rangle^{\mu_s}} $ is equivalent to 
 $\mu_s \in {\cal I}(\mu_{gp},\mu_p, \mu) \ \wedge \ 
A(\sigma) \subseteq {\cal I}(\mu_p, \mu,\mu_s)
\ \wedge \
\FORMb{\mu_p, \mu,\mu_s}{P}$.
 By Proposition~\ref{AfunctA}, $A(\tau) \subseteq {\cal I}(\mu_p, \mu,\mu_s)$, and by
 induction hypothesis, we have that 
 $\FORMb{\mu_p, \mu,\mu_s}{Q}$. As a consequence, we can conclude that
 $\FORMb{\mu_{gp} \mu_p \mu}{\tau \langle Q \rangle^{\mu_s}} $.
 \end{proof}

\noindent
{\bf Theorem 4.2} \textbf{(Subject Reduction)}\label{Asbj-red}\\
{\it  If $P \rightarrow Q$ and $\FORMb{\mu_{gp} \mu_p \mu}{P}$ then also 
$\FORMb{\mu_{gp} \mu_p \mu}{Q}$}.

\begin{proof}
The proof is by induction on $P \rightarrow Q$.
The proofs for the rules $(Par)$ and $(Brane)$ are straightforward, using the induction hypothesis and the clauses in Table~\ref{analysis}.
The proof for the $(Struct)$ uses instead the induction hypothesis and Lemma 4.1.
The proofs for the basic actions in the lower part of Table~\ref{opsem} 
are straightforward, using the clauses in Table~\ref{analysis}.
\\
\textbf{Case} (Par). 
Let $P$ be $P_0 \circ P_1$ and $Q$ be $P'_0 \circ P_1$, with $P_0 \rightarrow P'_0$. 
We have to prove that $ \FORMb{\mu_{gp} \mu_p \mu}{Q}$.
Now $ \FORMb{\mu_{gp} \mu_p \mu}{P}$ is equivalent to 
$\FORMb{\mu_{gp} \mu_p \mu}{P_0} \ \wedge \  \FORMb{\mu_{gp} \mu_p \mu}{P_1}$. 
By induction hypothesis, we have that $ \FORMb{\mu_{gp} \mu_p \mu}{P'_0}$, and from
$ \FORMb{\mu_{gp} \mu_p \mu}{P'_0} \ \wedge \  \FORMb{\mu_{gp} \mu_p \mu}{P_1}$ 
we obtain the required $ \FORMb{\mu_{gp} \mu_p \mu}{Q}$.
\\
\textbf{Case} (Brane). 
Let $P$ be $\sigma \langle P_0 \rangle^{\mu_s}$ and $Q$ be $\sigma \langle P'_0 \rangle^{\mu_s}$. 
We have to prove that $\FORMb{\mu_{gp} \mu_p \mu}{\sigma \langle P'_0 \rangle^{\mu_s}}$.
Now $\FORMb{\mu_{gp} \mu_p \mu}{P}$ is equivalent to have that
$\mu_s \in {\cal I}(\mu_{gp},\mu_p, \mu) \ \wedge \ 
A(\sigma) \subseteq {\cal I}(\mu_p, \mu,\mu_s)
\ \wedge \
\FORMb{\mu_p, \mu,\mu_s}{P_0}$. 
By induction hypothesis, we have that  $\FORMb{\mu_p, \mu,\mu_s}{P'_0}$. We can therefore conclude that
$\FORMb{\mu_{gp} \mu_p \mu}{Q}$.
\\
\textbf{Case} (Struct). 
Let $P \equiv P_0$, with $P_0 \rightarrow P_1$ such that $P_1 \equiv Q$.
By Lemma~\ref{Acongr}, we have that 
$\FORMb{\mu_{gp} \mu_p \mu}{P_0}$, by induction hypothesis $\FORMb{\mu_{gp} \mu_p \mu}{P_1}$ and, again by
Lemma~\ref{Acongr},
$\FORMb{\mu_{gp} \mu_p \mu}{Q}$.
\\
\textbf{Case} (Mate). 
Let $P$ be $mate_n.\sigma | \sigma_0 \langle P_0 \rangle^{\mu_{0}} \circ
mate_n^{\bot}.\tau | \tau_0 \langle P_1 \rangle^{\mu_1}$ and
$Q$ be $\sigma | \sigma_0 | \tau | \tau_0 \langle P_0 \circ P_1 \rangle^{\mu_{01}}$.
Then, $\FORMb{\mu_{gp} \mu_p \mu}{P}$ amounts to
$\FORMb{\mu_{gp} \mu_p \mu}{mate_n.\sigma | \sigma_0 \langle P_0 \rangle^{\mu_{0}}}$ and
$\FORMb{\mu_{gp} \mu_p \mu}{mate_n^{\bot}.\tau | \tau_0 \langle P_1 \rangle^{\mu_1}}$ and, in turn, to
$\mu_0,\mu_1 \in {\cal I}(\mu_{gp}, \mu_p, \mu)$, 
$\{mate_n\} \cup A(\sigma) \cup A(\sigma_0) \subseteq {\cal I}(\mu_p, \mu, \mu_{0})$,
$\{mate_n^{\bot}\} \cup A(\tau) \cup A(\tau_0) \subseteq {\cal I}(\mu_p, \mu, \mu_{1})$, and
$\FORMb{\mu_p \mu \mu_0}{P_0}$ and $\FORMb{\mu_p \mu \mu_1}{P_1}$.
Note that, $ ((\mu_p,\mu,\mu_{0}),(\mu_p,\mu,\mu_{1}))$ does not belong to $\mathcal{R}$.
Because of the closure conditions, from the above, we have, amongst the several implied conditions, that
$\exists \mu_{01} = {\bf MI_{mate}}(mate_n,\mu_0,mate_n^{\bot},\mu_1,\mu_{gp}, \mu_p, \mu)$ such that
$\mu_{01} \in{\cal I}(\mu_{gp},\mu_p,\mu)  \ \wedge \ 
{\cal I}(\mu_p, \mu, \mu_0) \cup {\cal I}(\mu_p, \mu, \mu_1) \subseteq {\cal I}(\mu_p, \mu, \mu_{01})$. 
From
${\cal I}(\mu_p, \mu, \mu_i)\subseteq {\cal I}(\mu_p, \mu, \mu_{01})$ for $i=0,1$, 
we have that
$A(\sigma) \cup A(\sigma_0) \subseteq {\cal I}(\mu_p, \mu, \mu_{01})$
and 
$A(\tau) \cup A(\tau_0) \subseteq (\mu_p, \mu, \mu_{01})$ and, by Proposition~\ref{Afatto}, 
we have that
$\FORMb{\mu_p \mu \mu_{01}}{P_0}$ and
$\FORMb{\mu_p \mu \mu_{01}}{P_1}$, and hence the required
$\FORMb{\mu_p \mu \mu_{01}}{Q}$.
\\
\textbf{Case} (Bud). $\!\!$ Let $P$ be $bud_n^{\bot}(\rho).\tau | \tau_0 \langle 
bud_n.\sigma | \sigma_0 \langle P_0 \rangle^{\mu_0} \circ P_1
\rangle^{\mu_1}$ and
$Q$ be the process $\rho  \langle \sigma | \sigma_0 \langle P_0 \rangle^{\mu_0} \rangle^{\mu_R} \circ \tau | \tau_0\langle P_1 \rangle^{\mu_1}$.
Now, $\FORMb{\mu_{gp} \mu_p \mu}{P}$ is equivalent to
$\mu_{1} \in {\cal I}(\mu_{gp}, \mu_p, \mu)$,
$\{bud_n^{\bot}(\rho)\} \cup A(\tau) \cup A(\tau_0) \subseteq {\cal I}(\mu_p, \mu, \mu_1)$, and, moreover,
$\FORMb{\mu_p \mu \mu_1}{bud_n.\sigma | \sigma_0 \langle P_0 \rangle^{\mu_0} }$ and
$\FORMb{\mu_p \mu \mu_1}{P_1}$, from which we have that
$\mu_{0} \in {\cal I}(\mu_p, \mu, \mu_1)$, 
$\{bud_n\} \cup A(\sigma) \cup A(\sigma_0) \subseteq {\cal I}(\mu, \mu_1,\mu_0)$, and
$\FORMb{\mu \mu_1\mu_0}{P_0}$.
Because of the closure conditions, from above, we have that
 $\exists \mu_{R} = {\bf MI_{bud}}(bud_n,\mu_0,bud_n^{\bot},\mu_1,\mu_{gp}, \mu_p, \mu)$ such that
$\mu_R \in {\cal I}(\mu_{gp}, \mu_p, \mu)$, $A(\rho) \subseteq {\cal I}(\mu_p, \mu, \mu_R)$, 
$\mu_0 \in {\cal I}(\mu_p,\mu, \mu_R)$,
and 
(${\cal I}(\mu,\mu_1,\mu_0) \subseteq {\cal I}(\mu,\mu_R,\mu_0)$ and
${\cal I}(\mu_1,\mu_0) \subseteq {\cal I}(\mu_R,\mu_0)$ (cond 2)).
We have that  $\FORMb{\mu_{gp} \mu_p \mu}{Q}$ is equivalent to have that
$\FORMb{\mu_{gp} \mu_p \mu}{\rho  \langle \sigma | \sigma_0 \langle P_0 \rangle^{\mu_0} \rangle^{\mu_R}}$ (1)
and that 
$\FORMb{\mu_{gp} \mu_p \mu}{\tau|\tau_0\langle P_1\rangle^{\mu_1}}$ (2).
For (1), we
have to prove that $\mu_R \in {\cal I}(\mu_{gp}, \mu_p, \mu)$,
$A(\rho) \subseteq {\cal I}(\mu_p, \mu, \mu_R)$ and 
$\FORMb{\mu_p \mu \mu_R}{\sigma | \sigma_0 \langle P_0 \rangle^{\mu_0} }$, that is equivalent to
$\mu_0 \in {\cal I}(\mu_p,\mu, \mu_R)$,
$A(\sigma) \cup A(\sigma_0) \subseteq {\cal I}(\mu, \mu_R,\mu_0)$ and 
$\FORMb{\mu\mu_R\mu_0}{P_0}$.
From the hypotheses, we have that
$\mu_0 \in {\cal I}(\mu_p,\mu, \mu_R)$.
Since
$A(\sigma) \cup A(\sigma_0) \subseteq {\cal I}(\mu, \mu_1,\mu_0)$ and (cond 2)
we have $A(\sigma) \cup A(\sigma_0) \subseteq {\cal I}(\mu, \mu_R)$.
From $\FORMb{\mu \mu_1\mu_0}{P_0}$, because of (cond 2) and Proposition~\ref{Afatto}, we have that
$\FORMb{\mu \mu_R\mu_0}{P_0}$.
For (2), we have to prove that $\mu_1 \in {\cal I}(\mu_{gp}, \mu_p, \mu)$,
$A(\tau) \cup A(\tau_0) \subseteq {\cal I}(\mu_p, \mu, \mu_1)$ and
$\FORMb{\mu_p \mu \mu_1}{P_1}$. All these conditions
are satisfied (see above).
Therefore, we obtain the required
$\FORMb{\mu_{gp} \mu_p \mu}{Q}$.
\\
\textbf{Case} (Drip). 
Let $P$ be $drip(\rho).\sigma | \sigma_0 \langle P_0 \rangle^{\mu_0}$ and $Q$ be
$\rho \langle  \rangle^{\mu_R}  \circ  \sigma | \sigma_0 \langle P_0\rangle^{\mu_0}$.
We have that $\FORMb{\mu_{gp} \mu_p \mu}{P}$ is equivalent to
$\mu_0 \in {\cal I}(\mu_{gp},\mu_p, \mu)$, 
$\{drip(\rho)\} \cup A(\sigma) \cup A(\sigma_0) \subseteq {\cal I}(\mu_p, \mu, \mu_0)$, and
$\FORMb{\mu_p \mu \mu_0}{P_0}$.
Because of the closure conditions, from the above,
$\exists \mu_{R} = {\bf MI_{drip}}(drip(\rho),\mu_0,\mu_{gp}, \mu_p, \mu)$ such that
$\mu_R \in  {\cal I}(\mu_{gp},\mu_p, \mu)$, 
$A(\rho) \subseteq {\cal I}(\mu_p, \mu, \mu_R)$.
We have that $\FORMb{\mu_{gp} \mu_p \mu}{Q}$ is equivalent to both
$\FORMb{\mu_{gp} \mu_p \mu}{\rho \langle  \rangle^{\mu_R}}$ and
$\FORMb{\mu_{gp} \mu_p \mu}{\sigma | \sigma_0 \langle P_0\rangle^{\mu_0}}$.
The first condition is verified, because
$\mu_R \in  {\cal I}(\mu_{gp},\mu_p, \mu)$ and $A(\rho) \subseteq {\cal I}(\mu_p, \mu, \mu_R)$.
The second amounts to
$\mu_0 \in {\cal I}(\mu_{gp},\mu_p, \mu)$ and $A(\sigma) \cup A(\sigma_0) \subseteq {\cal I}(\mu_p, \mu, \mu_0)$
and it is satisfied as well.
We therefore obtain the required
$\FORMb{\mu_{gp} \mu_p \mu}{Q}$.
\end{proof}

%

\noindent
{\bf Theorem 5.2}
{\it  Given a process $P$ including a membrane labelled $\mu$, then
if $c$ never appears on the membrane 
labelled $\mu$, then 
the capability $c$ never affects the membrane labelled $\mu$}.
 
\begin{proof}
First of all, we observe that if $c$ affects $\mu$ in $P$, then we have a contradiction, since it implies that
$c \not\in {\cal I}(\mu_{gp},\mu_p,\mu)$ for some context $\mu_{gp}\mu_p$.
We now show that there exist no $Q$, $Q'$ such that $P \rightarrow^{*} Q \rightarrow Q'$ such that
$c$ does not affect $\mu$ in $Q$, while it does in $Q'$.
The only case in which this can happen is when a $(Bud)$ or a $(Drip)$ is performed with parameter $\rho$ including $c$.
Indeed, the firing of such an action lets arise a new membrane affected by the corresponding parameter.
We focus on the second one. Suppose we have in $Q$ a sub-process $drip(\rho).\sigma | \sigma_0 \langle Q_0 \rangle^{\mu}$ and that 
$c$ occurs in $\rho$. 
This amounts to have that $c$ can affect $\mu$ in $Q'$. 
By theorem~\ref{sbj-red}, we have that $({\cal I}, {\cal R},{\cal C})$ is an estimate also for $Q$.
Nevertheless this implies that $c \in {\cal I}(\mu_{gp},\mu_p,\mu)$, thus leading to a contradiction.
\end{proof}

\end{document}